\newtheorem{proposition}{Proposition}
\newtheorem{theorem}{Theorem}
\newtheorem{example}{Example}
\def\phys{{\mathrm{phys}}}
\def\kin{{\mathrm{kin}}}
\def\op{{\mathrm{op}}}
\def\PN{{\mathrm{PN}}}
\def\persp{{\mathrm{persp}}}
\def\S{{\mathbf{S}}}
\def\bbR{{\mathbb{R}}}
\def\H{{\mathcal{H}}}
\newcommand{\showcomments}{true}
\newcommand{\acd}[1]%
{\ifthenelse{\equal{\showcomments}{true}}{{\color{magenta}{#1}}}{\xspace}}%
\newcommand{\vk}[1]%
{\ifthenelse{\equal{\showcomments}{true}}{{\color{orange}{#1}}}{\xspace}}%
\newcommand{\caslavcom}[1]
\begin{document}

\author{Anne-Catherine de la Hamette}
%\email{annecatherine.delahamette@univie.ac.at}
\thanks{These authors contributed equally to this work.}
\affiliation{University of Vienna, Faculty of Physics, Vienna Doctoral School in Physics, and Vienna Center for Quantum Science and Technology (VCQ), Boltzmanngasse 5, A-1090 Vienna, Austria}
\affiliation{Institute for Quantum Optics and Quantum Information (IQOQI),
Austrian Academy of Sciences, Boltzmanngasse 3, A-1090 Vienna, Austria}
\affiliation{Institute for Theoretical Studies, ETH Zürich, 8006 Zürich, Switzerland}

\author{Viktoria Kabel}
\thanks{These authors contributed equally to this work.}
\affiliation{Institute for Theoretical Physics, ETH Zürich, 8093 Zürich, Switzerland}

\author{\v{C}aslav Brukner}
\affiliation{University of Vienna, Faculty of Physics, Vienna Doctoral School in Physics, and Vienna Center for Quantum Science and Technology (VCQ), Boltzmanngasse 5, A-1090 Vienna, Austria}
\affiliation{Institute for Quantum Optics and Quantum Information (IQOQI),
Austrian Academy of Sciences, Boltzmanngasse 3, A-1090 Vienna, Austria}

\title{Accessibility of Global Properties from Internal Quantum Reference Frame Perspectives}

\begin{abstract}
    A fundamental question in the field of quantum reference frames concerns what global properties of a system can be determined by observers operating entirely from within that system. We investigate this question by extending both the perspectival and perspective-neutral approaches beyond the commonly studied zero total momentum case to arbitrary fixed charge sectors. When the entire system, including the reference frames, moves at a fixed total momentum $P$ relative to an external frame, this global charge becomes encoded in the quantum states and transformations between reference frames. Our extension leads to modified relative states and observables with QRF transformations that induce an additional $P$-dependent phase, treating all charge sectors as equally fundamental. By granting the internal observers successively more access and resources, we identify under which conditions they can infer the total momentum. These results clarify the relationship between major QRF approaches -- perspectival, perspective-neutral, operational, and extra-particle -- showing how their differing conclusions stem from different assumptions about which observables are deemed accessible from within. Our findings cast light on the relation between local and global perspectives and raise fundamental questions about scenarios where no global perspective exists, contributing to a deeper understanding of relationality and the role of perspectives in quantum theory.
\end{abstract}

\maketitle

\section{Introduction}

Symmetries are ubiquitous in physics: from the translation-invariance of simple mechanical systems to the fundamental symmetries of modern particle physics, they pervade our physical theories. The presence of symmetries implies, or is implied by, the fact that only quantities that are invariant under the action of the symmetry group are observable. In presence of translation invariance, we can only measure relative positions, often we are further restricted to relative velocities by the Galilei group, and if our theory has a gauge symmetry, only gauge-invariant quantities are deemed physically meaningful. Yet, we often leave the relationality of our description implicit. It is common to simply speak of \enquote{the position} of a point particle in Newtonian mechanics, even if the Hamiltonian is translation-invariant. This is justified because of the implicit use of a reference system -- for instance, we might tacitly describe all positions relative to the wall of the laboratory. While these reference frames are usually not treated as physical degrees of freedom, in practice they are always instantiated by physical systems. Recognising this and taking into account that all material systems should ultimately be describable by quantum theory has led to the development of quantum reference frames (QRFs), a set of frameworks in which reference frames themselves are described quantum mechanically.

In the \emph{perspectival approach to QRFs}, which was proposed in \cite{Giacomini_2017_covariance} and further developed in \cite{Giacomini_2018_spin, delaHamette_2020, Mikusch_2021, Barbado_2020, Apadula_2022, Giacomini_2020_einstein, Giacomini_2021, Giacomini_2021_einstein, Cepollaro_2021, Kabel_2022_conformal, delahamette_2022_DiffsICO, Wang_2023, Kabel_2024, cepollaro2024}, the focus is on characterising the perspectives relative to different QRFs and the transformations between them. Its name signals a contrast with the \emph{perspective-neutral approach to QRFs}, in which the descriptions relative to different QRFs are derived from a common, perspective-neutral description \cite{Vanrietvelde_2018a, Vanrietvelde_2018b, Höhn_2018a, Hoehn_2018b, Hoehn_2019_trinity, Hoehn_2020, Krumm_2021, hoehn2021quantum, Hoehn_2021, delaHamette_2021_perspectiveneutral, delaHamette_2021_entanglement, Hoehn_2023_subsystems, Chataignier_2024, Araujoregado_2025}. This could be seen as the description relative to a (possibly fictitious) external reference frame, access to which is prohibited by the presence of symmetries. In the perspective-neutral approach, the restriction due to the symmetry is realised through the imposition of a constraint -- the vanishing of the total momentum, $P=0$, in the case of the translation group. Since the total momentum acts as a generator of translations on all systems, projecting onto the zero charge sector yields a space of invariant states: the physical Hilbert space. Further \enquote{jumping} into the perspective of a particular system by controlling on its position being in the origin recovers the same description relative to different QRFs and transformations between them as in the perspectival approach.

Finally, the \emph{operational} \cite{Loveridge_2017, Carette_2023} and the \emph{extra-particle approach} \cite{Castro_Ruiz_2021} allow for arbitrary charge sectors, though both approaches differ in whether to encode information about the global charge in an accessible subsystem or whether to average over it. More precisely, \cite{Castro_Ruiz_2021} retains access to global charges even at the perspectival level as part of the physical description, by encoding this information in an \enquote{extra particle}. The operational approach, on the other hand, averages over external structures so as to eliminate any trace of global information, though at the cost of losing invertibility of frame changes (cf.~\cite{Castro-Ruiz_2025}).\\

The perspective-neutral framework and its perspective-dependent variants for QRFs exemplify the familiar complementarity between local and global descriptions in physics. Given a collection of local descriptions, one may inquire about their compatibility with a global representation: does a global description exist, from which the local ones can be recovered, and is such a description unique? There are situations in which no single global counterpart exists -- for instance, in extended Wigner’s friend scenarios \cite{Brukner_2017, Frauchiger_2018, Bong_2019}: under certain assumptions, one can prove that no joint probability distribution over all measurement outcomes can reproduce, as marginals, the accessible probabilities from the friend’s and Wigner’s perspectives. At the same time, we can encounter situations in which local perspectives are compatible with a global structure that is strictly more general. For example, in general relativity, local descriptions are approximately Minkowskian, yet the global structure can be a curved spacetime. In the present setting, perspective-dependent descriptions can indeed be obtained from the perspective-neutral formulation by appropriate reduction maps \cite{Vanrietvelde_2018a, Hoehn_2019_trinity, delaHamette_2021_perspectiveneutral}. Nevertheless, since the state relative to a (hypothetical) external reference frame carries more degrees of freedom than any single perspective accounts for, it is natural to ask the converse question: starting from the descriptions from single perspectives, can one reconstruct the description from the external, global perspective?\\

Here, we establish under which conditions, the global charge -- in the case of the translation group, the total momentum relative to an external reference frame -- can be inferred from internal perspectives alone. We do so by granting the internal observers progressively more resources, inspired by the different approaches to QRFs.   Our analysis shows concretely how the accessibility of these global properties depends fundamentally on the set of observables that are deemed accessible from within a given perspective, in line with previous arguments on the relation between the different approaches  \cite{Krumm_2021, Hoehn_2021, Carette_2023, Castro_Ruiz_2021, delaHamette_2021_perspectiveneutral,Hoehn_2023_subsystems, devuyst2025relation}, and whether one allows for classical communication. To achieve this, we first extend the perspectival and perspective-neutral formalisms to a new regime: that of a single, non-zero charge sector. We argue that, at least in the abelian case, there is no reason to single out the zero charge sector. After all, all one-dimensional irreducible representations of an abelian group differ only by a constant phase factor and choosing phase zero is no more fundamental than any other fixed phase. We therefore extend the perspectival and perspective-neutral frameworks for QRFs for the translation group to arbitrary fixed charge sectors. Interestingly, as we will see in the Discussion, this also turns out to be the most general form of a QRF transformation on the perspectival Hilbert spaces, compatible with transitivity and unitarity, ensuring invertibility and preservation of probabilities.\\

As for the structure of this paper, we will begin with the extension of both the perspectival and perspective-neutral frameworks (Section \ref{sec:extension}), deriving the QRF transformations, relative states, and invariant observables. We then proceed with the accessibility analysis in Section \ref{sec:accessibility}, studying three levels with progressively more resources, inspired by the operational and the perspectival/perspective-neutral approaches to QRFs. At the end of this section, we also briefly comment on the relation to the extra-particle approach. In the Discussion (Section \ref{sec:discussion}), we compare our results to recent literature in the field of QRFs and discuss potential generalisations of the extended QRF transformations. Finally, in Section \ref{sec:conclusion}, we conclude and give an outlook on open questions. 

\section{Extended Theoretical Framework}
\label{sec:extension}

Throughout this section, we will focus our attention on the one-dimensional translation group, $G=(\mathbb{R},+)$. Moreover, we will mainly consider three quantum systems $A$, $B$, and $C$ with Hilbert space $\H = \H_A\otimes\H_B\otimes \H_C \;\simeq\; L(\bbR)^{\otimes 3}$, each transforming under translations via the left regular representation. We will treat $A$ and $C$ as two ideal quantum reference frames, i.e.~they carry the regular representation of $G$, and $B$ as an additional, possibly composite, quantum system transforming under translations. These systems are prepared by an external referee Eve in a configuration of her choice. We will often refer to Eve, or $E$, as the external reference frame to which the internal systems do not have access. The quantity which is conserved under all symmetry transformations and which commutes with any other observable is the \emph{total centre of mass momentum} of $A$, $B$, and $C$, that is,  $\hat{P} = \hat{p}_A+\hat{p}_B+\hat{p}_C$.

\subsection{Perspectival Approach Extension} \label{sec:extension_perspectival}

The perspectival approach works directly at the level of the individual perspectives of the QRFs. The intuition is that, when describing other physical systems using standard quantum mechanics, we are implicitly doing so relative to a particular reference frame. Thus, strictly adhering to what is operationally accessible, the starting point is set directly within one quantum perspective, e.g.~the perspective of $A$ on $B$ and $C$. Interesting phenomena arise when one considers changing between different perspectives, such as the frame-dependence of coherence and entanglement \cite{Giacomini_2017_covariance, Vanrietvelde_2018a,delaHamette_2020,hoehn2021quantum, delaHamette_2021_perspectiveneutral, Hoehn_2023_subsystems, Kabel_2024, cepollaro2024}. To motivate the transformation between two different reference frames, $A$ and $C$, let us consider first what happens to the canonical variables \cite{Giacomini_2017_covariance}. For spatial reference frames, a change $\hat{S}_{A\to C}$ from $A$ to $C$ consists of a change of relative position coordinates of the form:
\begin{align}
\hat{x}_B &\mapsto \hat{S}_{A\to C}[\hat{x}_B]=\hat{q}_B - \hat{q}_A, \tag{1a}\label{eq:1a}\\
\hat{x}_C &\mapsto \hat{S}_{A\to C}[\hat{x}_C] = -\,\hat{q}_A \tag{1b}\label{eq:1b},
\end{align}
where we use $\hat{x}$ to denote positions relative to $A$ and $\hat{q}$ for positions relative to $C$. To obtain the transformation of the momentum operators, we have to appeal to the canonical commutation relations. Requiring that $[\hat{S}_{A\to C}[\hat{x}_i],\hat{S}_{A\to C}[\hat{p}_j]]=i\delta_{ij}$, with $i,j=B,C$ and $\hbar=1$, and assuming that the transformation is linear in the canonical variables, we find that (up to constants), the momenta transform as
\begin{align}
\hat{p}_B &\mapsto \hat{S}_{A\to C}[\hat{p}_B]= \hat{\pi}_B, \tag{1c}\label{eq:1c}\\
\hat{p}_C &\mapsto \hat{S}_{A\to C}[\hat{p}_C] = -\bigl(\hat{\pi}_A + \hat{\pi}_B\bigr), \tag{1d}\label{eq:1d}
\end{align}
where $\hat{p}$ and $\hat{\pi}$ refer to momenta in $A$'s and $C$'s frame, respectively. Note that these new canonical variables also give rise to a particular tensor product structure and thus a new notion of subsystem \cite{Zanardi_2001}, which is at the heart of the frame-dependence of many interesting quantities, such as entanglement \cite{Giacomini_2017_covariance, Vanrietvelde_2018a,delaHamette_2020,hoehn2021quantum, delaHamette_2021_perspectiveneutral, Hoehn_2023_subsystems, Kabel_2024, cepollaro2024}. As shown in \cite{Giacomini_2017_covariance}, the quantum operator implementing these canonical transformations takes the form of a quantum-controlled translation:
\setcounter{equation}{1}
\begin{align}
    \hat{S}_{A\to C} = \hat{\mathcal{P}}_{AC} e^{i\hat{x}_C \hat{p}_B}\label{eq: QRF0trafo} :\ \H_B^{(A)} \otimes \H_C^{(A)} \to \H_A^{(C)} \otimes \H_B^{(C)}
\end{align}
with $\hat{\mathcal{P}}_{AC}=\int dx\ _A\ket{-x}\bra{x}_C$ the parity-swap operator between $A$ and $C$ (cf.~\cite{Giacomini_2017_covariance}) and $\H_i^{(j)}$ denoting the state space of system $i$ relative to system $j$.
Note that Eq.~\eqref{eq:1d} states that, after transforming to $C$’s reference frame, the momentum assigned to system $C$ equals the negative of the total momentum of the \enquote{rest of the universe} (systems $A$ and $B$). This, in turn, means that the total momentum of all systems, as seen from an external reference frame, is zero -- precisely the starting point of the perspective-neutral approach (see Section \ref{sec:extensionPN}).\footnote{More precisely, we identify the momentum of a system relative to a QRF with its momentum as measured from an external reference frame. This identification is justified because our symmetry group includes only spatial translations, not boosts: while absolute positions are unobservable, momenta remain well-defined absolute quantities in each perspective (see also the discussion at the beginning of Section \ref{sec:accessibility}).} If, instead, the last line were replaced by
\begin{align}
\hat{p}_C \mapsto -\bigl(\hat{\pi}_A+\hat{\pi}_B\bigr)+P\tag{1d'}\label{eq:1d'},
\end{align}
for some constant $P\in \mathbb{R}$, the same reasoning would lead to the conclusion that the total momentum of the three systems is $P$ -- without affecting the canonical commutation relations, which remain unchanged by the addition of a constant. This modified canonical transformation is implemented by the \emph{generalised QRF transformation operator} for translations from $A$ to $C$ for three systems $A,B,$ and $C$ with total momentum $P$: 
\begin{align}
    \hat{S}^{P}_{A\to C} = \hat{\mathcal{P}}_{AC} e^{i\hat{x}_C (\hat{p}_B-P)}.\label{eq: genQRFtrafo}
\end{align}

The generalised QRF transformation differs from the QRF transformation for zero total momentum by a quantum-controlled phase $e^{-i\hat{x}_CP}$, which depends on both the total momentum $P$ and the position $\hat{x}_C$ of the new reference frame with respect to the old frame. When applied to a coherent superposition of position states of $C$ relative to $A$,  it induces an additional relative phase between the corresponding positions of $A$ relative to $C$ compared to the standard QRF transformation.\\

Just like the standard QRF transformation, the generalised transformation $\hat{S}^P_{A\to C}$ constitutes a valid frame change. In particular, it respects the canonical commutation relations and, as shown in Appendix \ref{app:genQRFtrafosPersp}, is unitary and transitive. Moreover, it can be straightforwardly extended to the case of $N$ systems momentum $P$, for which $\hat{p}_3 + \dots + \hat{p}_N = P - \hat{p}_1 -\hat{p}_2 $. Then, the QRF change operator from subsystems $1$ to $2$ reads
\begin{align}
    \hat{S}^{P}_{1\to 2} = \hat{\mathcal{P}}_{12} e^{i\hat{x}_2 (\sum_{i=3}^N \hat{p}_i-P)}.
\end{align}
\\
The generalised QRF transformation $\hat{S}^P_{A\to C}$ can also be understood within the standard framework of zero total momentum by viewing the subsystem $BC$ as moving relative to a fourth system $D$ in state $\ket{-P}_D$. Applying the standard QRF transformation $\hat{S}_{A\to C}=\hat{S}^0_{A\to C}$ to the resulting product state, we get
\begin{align}
    \hat{S}^0_{A\to C}\ket{\psi}^{(A)}_{BCD} &= \mathcal{P}_{AC}e^{i\hat{x}_C(\hat{p}_B+\hat{p}_D)}\ket{\psi^P}^{(A)}_{BC}\ket{-P}_D = \mathcal{P}_{AC}e^{i\hat{x}_C(\hat{p}_B-P)}\ket{\psi^P}^{(A)}_{BC}\ket{-P}_D\nonumber \\ &= \left(\hat{S}^P_{A\to C}\ket{\psi^P}^{(A)}_{BC}\right)\ket{-P}_D.
\end{align}
That is, the standard QRF transformation on the three-party system $BCD$ acts just like the extended QRF transformation on the two-party subsystem $BC$ from the point of view of $A$ while leaving the state of $D$ unchanged. In essence, the additional momentum-dependent phase of the modified QRF transformation can be seen to arise through a phase-kickback from an additional system, moving with opposite momentum relative to the subsystem under consideration \cite{Valente2025semester}. This highlights the consistency of the QRF transformations for different values of the total momentum.\\ 

Finally, we can understand the additional phase in the generalised QRF transformation as a change in the representation of the translation group. To see this, note  that the standard QRF transformation operator can be written as
\begin{equation}
\hat{S}_{A\to C}=\hat{\mathcal{P}}_{AC}\hat{T}_B(-\hat{x}_C),
\end{equation}
where $\hat{T}_B(x)=e^{-i\hat{p}_Bx}$ is the regular representation of the translation group on $\mathcal{H}_B^{(A)}$. Similarly, we can write
\begin{equation}
\hat{S}^P_{A\to C}=\hat{\mathcal{P}}_{AC}\hat{T}^P_B(-\hat{x}_C),
\end{equation}
where $\hat{T}_B^P(x)=e^{-i(\hat{p}_B-P)x}$.\footnote{Note that if there is more than one system in addition to the two QRFs, it is the representation on their \emph{joint} Hilbert space that changes. For example, if we have four systems $A,B,C,D$, we would get $\hat{T}_{BD}^P(x)=e^{-i(\hat{p}_B+\hat{p}_D-P)}$.} If we work on the projective Hilbert space, i.e.~identifying states that differ by a global phase, $\hat{T}_B^P(x)$ is actually equivalent to the regular translation representation $\hat{T}_B(x)$. On the Hilbert space $\mathcal{H}_B^{(A)}$, however, it constitutes an inequivalent representation.\footnote{Note further that, while the representation $\hat{T}^P(x) = e^{-ix(\hat{P}-P)}$ is still a strict representation satisfying $\hat{T}^P(x)\hat{T}^P(y) = \hat{T}^P(x+y)$ thanks to the abelian nature of the translation group, the situation is less straightforward when working with non-abelian groups. In fact, representations corresponding to different charge sectors in the non-abelian case are inequivalent even when working with projective Hilbert spaces.} The relation between the extended QRF transformation and different representations of the translation group will become even clearer when taking on a global or perspective-neutral point of view, since the total momentum characterises the different charge sectors of the representation of the translation group on the total Hilbert space $\mathcal{H}_{\kin}$.

\subsection{Perspective-Neutral Approach Extension}
\label{sec:extensionPN}

In this section, we show that the same extension arises naturally within the perspective-neutral approach to QRFs \cite{Vanrietvelde_2018a, Vanrietvelde_2018b, Höhn_2018a, Hoehn_2018b, Hoehn_2019_trinity, Hoehn_2020, Krumm_2021, hoehn2021quantum, Hoehn_2021, delaHamette_2021_perspectiveneutral, delaHamette_2021_entanglement, Hoehn_2023_subsystems, Chataignier_2024, Araujoregado_2025} through a modification of the constraint. In this approach, the perspectives relative to different choices of reference frame are obtained in two steps: Starting from a global point of view, one first implements invariance under the action of the symmetry group before conditioning on a particular state of the chosen QRF. Standardly, translation invariance of the total system is imposed by setting the total momentum to zero. This is analogous to the procedure in the Dirac quantisation of gauge theories, where one imposes the constraint which generates the gauge transformations on the quantum states \cite{Henneaux_Teitelboim_1992}.\\

To gain some intuition for this procedure, let us assume that Eve prepares systems $A$, $B$, and $C$ in the state
\begin{equation}
\ket{\psi_\kin}=\frac{1}{\sqrt{2}}\left(\ket{x_1}_A +\ket{x_2}_A\right)\ket{\phi}_B \ket{z}_C, \label{eq:kinematicalstate}
\end{equation}
where $\ket{\phi} = \int dy \phi(y) \ket{y}_B$ is an arbitrary state in $\mathcal{H}_B$ and $x,y$, and $z$ can be understood as the \enquote{absolute} positions of $A, B$, and $C$ relative to Eve. This state is referred to as the \emph{kinematical} state, and the corresponding Hilbert space is denoted by $\mathcal{H}_{\kin}$, to distinguish it from the \emph{physical} Hilbert space $\mathcal{H}_{\phys}$ obtained by imposing the constraint. We denote the momenta of the respective systems by $p_A$, $p_B$, and $p_C$. Imposing the constraint $\hat{P} = 0$ gives the physical state
\begin{align}
    \ket{\psi_\phys}&_{ABC}\equiv \delta(\hat{P})\ket{\psi_\kin} = \delta(\hat{p}_A+\hat{p}_B+\hat{p}_C)\frac{1}{\sqrt{2}}\left(\ket{x_1}_A +\ket{x_2}_A\right)\ket{\phi}_B \ket{z}_C \\ 
    &= \frac{1}{\sqrt{2}}\int dp_A dp_B dp_C \delta(p_A+p_B+p_C)\ \left(e^{-ip_A x_1}+e^{-ip_A x_2}\right) \int dy \phi(y)e^{-ip_B y}e^{-ip_C z} \ket{p_A}_A \ket{p_B}_B \ket{p_C}_C \\
    &= \frac{1}{\sqrt{2}}\int dp_A dp_B \left(e^{-ip_A (x_1-z)}+e^{-ip_A (x_2-z)}\right)\int dy \phi(y)e^{-ip_B(y-z)} \ket{p_A}_A \ket{p_B}_B\ket{-p_A-p_B}_C. 
\end{align}
From this \emph{perspective-neutral} state, we can obtain the respective states relative to the frames $A$ and $C$ by applying the Schrödinger reduction maps $\mathcal{R}_\S^{(A)}(X)=\bra{X}_A \otimes \mathbf{1}_{BC}$ and $\mathcal{R}_\S^{(C)}(Z)=\mathbf{1}_{AB} \otimes\bra{Z}_C$ \cite{delaHamette_2021_perspectiveneutral}. These can be understood as conditioning on frame $A$ being in position $X$ (relative to Eve) and frame $C$ being in  position $Z$ (relative to Eve), respectively. We thus obtain the \emph{relative physical states}
\begin{align}
     \ket{\psi}^{(A)}_{BC} &\equiv \mathcal{R}_\S^{(A)}(X) \ket{\psi_\phys}_{ABC} \nonumber\\ &= \frac{1}{\sqrt{2}}\int dy \phi(y)\left(\ket{y-(x_1-X)}_B\ket{z-(x_1-X)}_C\right. \left.+\ket{y-(x_2-X)}_B\ket{z-(x_2-X)}_C\right),\\
     \ket{\psi}^{(C)}_{AB} &\equiv \mathcal{R}_\S^{(C)}(Z) \ket{\psi_\phys}_{ABC} 
    =\frac{1}{\sqrt{2}}\left(\ket{x_1-(z-Z)}_A+\ket{x_2-(z-Z)}_A\right)\int dy \phi(y)\ket{y-(z-Z)}_B.
\end{align}

These quantum states are understood as those assigned, respectively, by $A$ to $B$ and $C$, and by $C$ to $A$ and $B$. In fact, for $X=Z=0$, they precisely correspond to the perspectival states relative to $A$ and $C$ in the previous section, related by $\hat{S}_{A\to C}$ \cite{Vanrietvelde_2018a}. In fact, the perspective-neutral and perspectival approaches have generally been shown to be equivalent for ideal QRFs \cite{delaHamette_2021_perspectiveneutral}. Importantly, the states $\ket{\psi}^{(A)}_{BC}$ and $\ket{\psi}^{(C)}_{AB}$ no longer carry any reference to the external structure $E$. The total momentum of systems $A$, $B$, and $C$, which is a global property defined with respect to Eve, is fixed to the particular value zero. While setting the constraint to zero is the common choice in the context of gauge theories (although this, too, may be questioned in the presence of boundaries, see, e.g.~\cite{Rovelli_2013_whygauge,Donnelly_2016}), we would like to stress that there is nothing that singles out a vanishing total momentum in the case of the translation group. In particular, if we understand the perspective-neutral state not just as an abstract mathematical construct but as the perspective of a (possibly fictitious) external observer, it is entirely natural to allow for the possibility of the total system moving at some fixed, non-zero momentum relative to her.\\

For these reasons, we want to allow Eve to prepare the systems with an arbitrary total momentum of her choice instead.  Formally, this corresponds to choosing a different global charge sector or, equivalently, modifying the constraint.\footnote{Note that the relation to the constraint in gauge theories is a little more subtle here since we are dealing with a physical symmetry rather than a gauge symmetry. The analogue of the translation invariance in gauge theory should thus be seen as the \emph{large} gauge transformations, which have a non-trivial action at the boundary of the spacetime region of interest. These are \emph{physical} symmetries and can thus be generated by a non-vanishing constraint.} Thus, consider again the example where Eve prepares the state \eqref{eq:kinematicalstate} but now fixes the total momentum of $A, B,$ and $C$ to the constant value $P$. The physical state is now obtained by imposing $\hat{P}=P$, yielding
\begin{align}
    &\ket{\psi_\phys^P}_{ABC}\equiv \delta(\hat{P}-P)\ket{\psi_\kin} = \delta(\hat{p}_A+\hat{p}_B+\hat{p}_C-P)\frac{1}{\sqrt{2}}\left(\ket{x_1}_A +\ket{x_2}_A\right)\ket{\phi}_B \ket{z}_C \\ 
    &= \frac{1}{\sqrt{2}}\int dp_A dp_B dp_C \delta(p_A+p_B+p_C-P) \left(e^{-ip_A x_1}+e^{-ip_A x_2}\right) \int dy \phi(y)e^{-ip_B y}e^{-ip_C z} \ket{p_A}_A \ket{p_B}_B \ket{p_C}_C \\
    &= e^{-iPz} \frac{1}{\sqrt{2}}\int dp_A dp_B \left(e^{-ip_A (x_1-z)}+e^{-ip_A (x_2-z)}\right)\int dy \phi(y)e^{-ip_B(y-z)} \ket{p_A}_A \ket{p_B}_B\ket{P-p_A-p_B}_C.\label{eq:psiPhysP} 
\end{align}
We observe that the physical state in a non-zero charge sector differs from that in the zero charge sector by a global phase $e^{-iPz}$. Moreover, the momentum of $C$ is now shifted by the total momentum, in line with the transformation properties of the canonical variables in Eq.~\eqref{eq:1d'}.\\

Before restricting to the perspectives of $A$ and $C$, let us return to the connection with representation theory, which we briefly alluded to at the end of the last section. Let us start from the kinematical Hilbert space $\mathcal{H}_{\kin}$. Note that we are generally dealing with the regular representation on the three kinematical systems, which decomposes into the direct sum over all charge sectors as
\begin{align}
    \mathcal{H}_{\kin}=\bigoplus_P \H_P.
\end{align}
In the usual perspective-neutral approach, the constraint $\hat{P}=0$ leads to a restriction to the zeroth charge sector $\mathcal{H}_0$ and is implemented by coherently averaging over all translations generated by $U^0(x)$ -- after all $\int dx\ U^0(x)\ket{\psi_{\kin}}=\int dx\ e^{i\hat{P}x}\ket{\psi_{\kin}}=\delta(\hat{P})\ket{\psi_{\kin}}$. More generally, if we want to restrict to charge sector $P$, we have to use $U^P(x) = e^{i(\hat{P}-P)x}$ instead.\footnote{The \enquote{extra-particle} approach goes one step further and considers the possibility of acting on all charge sectors with corresponding representations. We will go into more detail on this approach in Section \ref{sec: extraParticle}.} We thus apply the coherent $G$-twirl\footnote{The coherent $G$-twirl is sometimes referred to as the strong $G$-twirl or coherent group-averaging.} based on representation $U^P(x)$ to obtain the physical state
\begin{align}
   \ket{\psi_\phys^P}_{ABC}& =\mathcal{G}^P_{coh}[\ket{\psi_{\kin}}] \equiv\int dx U^P(x) \ket{\psi_{\kin}}\\ &= \int dx e^{ix(\hat{p}_A+\hat{p}_B + \hat{p}_C-P)}\frac{1}{\sqrt{2}}\left(\ket{x_1}_A +\ket{x_2}_A\right)\ket{\phi}_B \ket{z}_C \\
   &= \frac{1}{\sqrt{2}}\int dx e^{-ixP}\left(\ket{x_1-x}_A+\ket{x_2-x}_A\right)\int dy \phi(y) \ket{y-x}_B\ket{z-x}_C,
\end{align}
which can be seen to coincide with Eq.~\eqref{eq:psiPhysP}, thereby projecting from the kinematical Hilbert space onto the single charge sector $\mathcal{H}_P$. We can thus implement the constraint by coherently averaging over all possible global translations, just as in the usual perspective-neutral construction. The only difference is that the generator of translations deviates from the representation $U(x) = e^{ix\hat{P}}$ used in the standard formalism by a phase.\\

Let us now consider the two internal frame perspectives, $A$ and $C$, respectively. We apply the Schrödinger reduction maps $\mathcal{R}_\S^{(A)}(X)=\bra{X}_A \otimes \mathbf{1}_{BC}$ and $\mathcal{R}_\S^{(C)}(Z)=\mathbf{1}_{AB} \otimes\bra{Z}_C$ and obtain the respective relative physical states:
\begin{align}
     \ket{\psi^P}^{(A)}_{BC} \equiv \mathcal{R}_\S^{(A)}(X) \ket{\psi_\phys^P}_{ABC} = \frac{1}{\sqrt{2}}\int dy \phi(y)&\left(e^{-i(x_1-X)P}\ket{y-(x_1-X)}_B\ket{z-(x_1-X)}_C\right.\nonumber\\
    &\left.+e^{-i(x_2-X)P}\ket{y-(x_2-X)}_B\ket{z-(x_2-X)}_C\right),\label{eq:psiRelAX}\\
     \ket{\psi^P}^{(C)}_{AB} \equiv \mathcal{R}_\S^{(C)}(Z) \ket{\psi_\phys^P}_{ABC} 
    =e^{-i(z-Z)P}\frac{1}{\sqrt{2}}&\left(\ket{x_1-(z-Z)}_A+\ket{x_2-(z-Z)}_A\right)\int dy \phi(y)\ket{y-(z-Z)}_B.\label{eq:psiRelCZ}
\end{align}

These can be seen to be related by the \emph{generalised QRF transformation operator}
\begin{align}
   \hat{S}^P_{A\to C}(X,Z)=\int dz \ket{X+Z-z}_A\bra{z}_C\otimes \hat{U}_B^P(Z-z))=\int dz \ket{X+Z-z}_A\bra{z}_C\otimes e^{i(z-Z)(\hat{p}_B-P)}.\label{eq:genQRFtrafoPN}
\end{align}
In Appendix \ref{app:genQRFtrafosPN}, we show that this also holds for general relative physical states. Note that, commonly, one sets the reference frame to be at the origin relative to itself, i.e.~$X=Z=0$, such that
\begin{align}
    \ket{\psi^P}^{(A)}_{BC} &\equiv \mathcal{R}_\S^{(A)}(0) \ket{\psi_\phys^P}_{ABC} 
    =\frac{1}{\sqrt{2}}\int dy \phi(y)\left(e^{-ix_1P}\ket{y-x_1}_B\ket{z-x_1}_C+e^{-ix_2P}\ket{y-x_2}_B\ket{z-x_2}_C\right),\label{eq:psiRelAP}\\
    \ket{\psi^P}^{(C)}_{AB} &\equiv \mathcal{R}_\S^{(C)}(0) \ket{\psi_\phys^P}_{ABC} 
    =e^{-izP}\frac{1}{\sqrt{2}}\left(\ket{x_1-z}_A+\ket{x_2-z}_A\right)\int dy \phi(y)\ket{y-z}_B.\label{eq:psiRelCP}
\end{align}
These states can equivalently be understood as the perspectival states relative to $A$ and $C$ for non-zero total momentum $P$ (cf.~\cite{Vanrietvelde_2018a, delaHamette_2021_perspectiveneutral}). Consequently, the QRF transformation operator relating them must be equivalent to Eq.~\eqref{eq: genQRFtrafo} -- as seen straightforwardly by setting $X=Z=0$ in Eq.~\eqref{eq:genQRFtrafoPN}:
\begin{align}
    \hat{S}_{A\to C}^P(0,0) =\int dz \ket{X+Z-z}_A\bra{z}_C\otimes \hat{U}_B^P(Z-z))=\int dz \ket{-z}_A\bra{z}_C\otimes e^{iz(\hat{p}_B-P)}=\hat{\mathcal{P}}_{AC}e^{i\hat{x}_C(\hat{p}_B-P)}.
\end{align}
Note that the relative states in Eqs.~\eqref{eq:psiRelAP} and \eqref{eq:psiRelCP} again differ in their phases from those obtained in the zero charge sector. These phases depend on the positions $x_1$ and $x_2$ resp.~$z$ of the chosen reference frame relative to Eve. While this amounts to a global phase in the state relative to $C$, which we chose to be localised relative to the external frame, it is a {\em relative phase} for the reference frame $A$, which we considered to be in a superposition of locations $x_1$ and $x_2$ relative to $E$. Thus, for general QRFs, this phase becomes in principle observable. We discuss to what extent the phase can actually be observed in more detail in the next section.\\

\begin{figure}[h!]
    \centering
    \begin{tikzpicture}[font=\large]
         \node[above, fill=gray!15, anchor=south, rounded corners] at  (0,3){$\mathcal{H}_{\kin}=\mathcal{H}_A\otimes\mathcal{H}_B\otimes \mathcal{H}_C$};
        \node[above, fill=gray!15, anchor=south, rounded corners] at (0,1) {$\mathcal{H}_{\phys}$};
        \node[below, fill=gray!15, anchor=north, rounded corners] at (-2.5,0) {$\mathcal{H}^{(A)}_{BC}$};
        \node[below, fill=gray!15, anchor=north, rounded corners] at (2.5,0) {$\mathcal{H}^{(C)}_{AB}$};

        \draw[->, purple] (0,2.9) -- (0,1.8) node[right] at (0.2,2.3) {$\color{purple}{\Pi_{\phys}=\delta(\hat{P}-P)}$};
        \draw[->] (-0.2,0.8) -- (-1.8,0) node[left] at (-0.9,0.7) {$\mathcal{R}^{(A)}_\mathbf{S}$};
        \draw[->] (0.2,0.8) -- (1.8,0) node[right] at (0.9,0.7) {$\mathcal{R}^{(C)}_\mathbf{S}$};
    \end{tikzpicture}
    \caption{\emph{Extended perspective-neutral construction.} The construction follows the usual perspective-neutral approach by starting from the kinematical Hilbert space $\mathcal{H}_{\kin}$, projecting on the physical Hilbert space $\mathcal{H}_{\phys}$, and reducing into the perspective of $A$ and $C$, respectively, using the reduction maps $\mathcal{R}^{(A)}_\mathbf{S}$ and $\mathcal{R}^{(C)}_\mathbf{S}$. Our modification lies in the first step (highlighted in purple) through a different choice of projector $\Pi_{\phys}=\delta(\hat{P}-P)$ for non-zero total momentum $P$.}
    \label{fig:placeholder}
\end{figure}
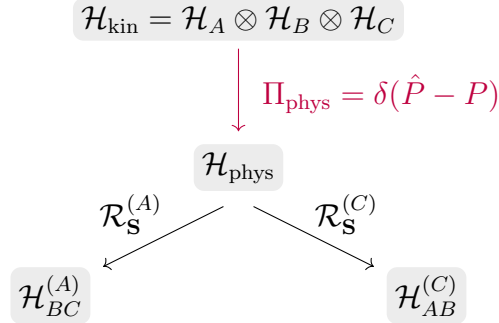

Let us also highlight again that this derivation of states in non-zero charge sectors is perfectly consistent with the framework for physical states in the zero charge sector, upon adding an additional system $D$, whose momentum exactly cancels the overall momentum $P$ of the subsystem $ABC$. We can thus see the non-zero charge sector as arising when some external agent, Eve, prepares the subsystem $ABC$ to move with momentum $P$ relative to some other system $D$, while the combined system remains in the $P=0$ charge sector. That is, she prepares the state
\begin{equation}
    \ket{\psi_{\kin}}^{(E)}_{ABCD} = \frac{1}{\sqrt{2}}(\ket{x_1}_A+\ket{x_2}_A)\ket{y}_B\ket{z}_C\ket{-P}_D.
\end{equation}
The physical state is obtained through the standard coherent $G$-twirl on the four-party system, i.e.
\begin{equation}
    \ket{\psi}^{(E)}_{ABCD} = \int dx \hat{U}_{0}(x)\ket{\psi_{\kin}}^{(E)}_{ABCD} = \frac{1}{\sqrt{2}}\int dx (\ket{x_1+x}_A+\ket{x_2+x}_A)\ket{y+x}_B\ket{z+x}_Ce^{ixP}\ket{-P}_D.
\end{equation}
Going into the perspective of $A$ (with $X=0$), we obtain
\begin{equation}
    \ket{\psi}^{(A)}_{BCD} = \frac{1}{\sqrt{2}}(e^{-ix_1P}\ket{y-x_1}_B\ket{z-x_1}_C+e^{-ix_2P}\ket{y-x_2}_B\ket{z-x_2}_C)\ket{-P}_D= \ket{\psi^P}^{(A)}_{BC}\ket{-P}_D.\label{eq:psiRelAD}
\end{equation}
Note that the state of $D$ still factorises out in the perspective of $A$. Tracing out the additional system, we thus simply obtain the physical state of $B$ and $C$ relative to $A$ in the non-zero charge sector (cf.~Eq.~\eqref{eq:psiRelAP}). A similar procedure for $C$ (with $Z=0$) yields
\begin{equation}
    \ket{\psi}^{(C)}_{ABD}=\frac{1}{\sqrt{2}} (\ket{x_1-z}_A+\ket{x_2-z}_A)\ket{y-z}_Be^{-izP}\ket{-P}_D = \ket{\psi^P}^{(C)}_{AB}\ket{-P}_D\label{eq:psiRelCD}
\end{equation}
and thus reproduces Eq.~\eqref{eq:psiRelCP} upon tracing out $D$. Note that since $D$ factorises out in both frames (thus avoiding the paradox of the third particle \cite{Krumm_2021}), it can be traced out without losing coherence (cf.~\cite{Valente2025semester}).

\subsection{Invariant Observables} \label{sec:invariantobservables}

A natural question that arises in the context of the generalised QRF transformation $\hat{S}^P_{A\to C}$ is whether observables that are invariant under the transformation for $P=0$ remain invariant if we consider non-zero total momentum. The answer is characterised by the following theorem.

\begin{theorem}
\label{th: invariance}
   Given a Hermitian operator with Schmidt decomposition $\hat{O}_{BC}^{(A)}=\sum_{k=1}^{n}\lambda_k \hat{O}^{(k)}_B\otimes \hat{O}^{(k)}_C\in \mathcal{L}(\mathcal{H}_{B}^{(A)}\otimes \mathcal{H}_{C}^{(A)})$, which is invariant under the QRF transformation from $A$ to $C$ for zero charge sector, i.e.~$\hat{S}_{A\to C} \hat{O}_{BC}^{(A)} \hat{S}^\dagger_{A\to C} =\mathcal{P}_{AC} \hat{O}_{BC}^{(A)}\mathcal{P}_{AC}^\dagger$, it is invariant under the generalised QRF transformation $\hat{S}^P_{A\to C}$ from $A$ to $C$ if and only if $[\hat{x}_C,\hat{O}_C^{(k)}]=0$ for all $k\in\{1,\dots,n\}$. 
\end{theorem}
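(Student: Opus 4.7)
The plan is to exploit the observation that $\hat{S}^P_{A\to C}$ differs from $\hat{S}_{A\to C}$ by a pure phase operator acting only on $C$, and then reduce the invariance requirement on $\hat{O}_{BC}^{(A)}$ to a condition on each Schmidt component $\hat{O}_C^{(k)}$ separately.

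First, I would factor $\hat{S}^P_{A\to C} = \hat{\mathcal{P}}_{AC}\,e^{i\hat{x}_C(\hat{p}_B-P)} = \hat{S}_{A\to C}\,\hat{V}^P$, with $\hat{V}^P \equiv e^{-i\hat{x}_C P}$, which is immediate because $\hat{p}_B$ and $\hat{x}_C$ act on disjoint tensor factors and $P$ is a scalar. Substituting this factorisation into $\hat{S}^P_{A\to C}\,\hat{O}_{BC}^{(A)}\,(\hat{S}^P_{A\to C})^\dagger$ and using the hypothesised invariance $\hat{S}_{A\to C}\,\hat{O}_{BC}^{(A)}\,\hat{S}_{A\to C}^\dagger = \hat{\mathcal{P}}_{AC}\,\hat{O}_{BC}^{(A)}\,\hat{\mathcal{P}}_{AC}^\dagger$, the whole statement collapses, after unitary cancellations of $\hat{S}_{A\to C}(\cdot)\hat{S}_{A\to C}^\dagger$, to the single algebraic requirement $[\hat{V}^P,\hat{O}_{BC}^{(A)}]=0$.

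Next, since $\hat{V}^P$ acts trivially on $B$, the Schmidt decomposition yields $[\hat{V}^P,\hat{O}_{BC}^{(A)}] = \sum_{k} \lambda_k\,\hat{O}_B^{(k)}\otimes [\hat{V}^P,\hat{O}_C^{(k)}]$. The defining orthonormality of the $\{\hat{O}_B^{(k)}\}$ under the Hilbert--Schmidt inner product, and the linear independence it implies, forces the vanishing of this sum to be equivalent to $[\hat{V}^P,\hat{O}_C^{(k)}]=0$ for every $k$ with $\lambda_k\neq 0$.

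Finally, I would argue that, for $P\neq 0$, $[\hat{V}^P,\hat{O}_C^{(k)}]=0$ is equivalent to $[\hat{x}_C,\hat{O}_C^{(k)}]=0$. The forward direction (from $\hat{x}_C$ to $\hat{V}^P$) is immediate by functional calculus. For the converse, I would pass to the position representation, where $\bra{x}[\hat{V}^P,\hat{O}_C^{(k)}]\ket{y} = (e^{-ixP}-e^{-iyP})\,\bra{x}\hat{O}_C^{(k)}\ket{y}$ vanishes away from the measure-zero locus $P(x-y)\in 2\pi\mathbb{Z}$, forcing $\hat{O}_C^{(k)}$ to be supported on the diagonal and hence a function of $\hat{x}_C$ alone. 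I expect this last step to be the main delicacy: one must treat $\hat{O}_C^{(k)}$ as an operator-valued distribution on $\mathbb{R}\times\mathbb{R}$ rather than a matrix, and use the continuous spectrum of $\hat{x}_C$ to ensure that \emph{being supported on the diagonal} translates into genuine commutation in the operator-theoretic sense.
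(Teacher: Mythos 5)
Your first two steps coincide with the paper's own proof in Appendix~\ref{app: invariantOperators}: factoring $\hat{S}^P_{A\to C}=\hat{S}_{A\to C}\,e^{-i\hat{x}_CP}$, cancelling the standard transformation via the hypothesis so that everything reduces to $[e^{-i\hat{x}_CP},\hat{O}_{BC}^{(A)}]=0$, and then using the Hilbert--Schmidt orthonormality of the Schmidt components to pass to $[e^{-i\hat{x}_CP},\hat{O}_C^{(k)}]=0$ for each $k$. Up to that point the argument is sound and is essentially the route the paper takes.

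The gap is in your final step. You correctly identify that the kernel identity forces $\bra{x}\hat{O}_C^{(k)}\ket{y}$ to be supported on the locus $P(x-y)\in 2\pi\mathbb{Z}$, but you then conclude support on the diagonal. That locus is a countable union of lines $x-y=2\pi n/P$, $n\in\mathbb{Z}$, of which the diagonal is only the $n=0$ component. Commutation with $e^{-i\hat{x}_CP}$ for a single fixed $P\neq 0$ is therefore strictly weaker than commutation with $\hat{x}_C$: for instance $\cos\bigl((2\pi/P)\hat{p}_C\bigr)$ is Hermitian and commutes with $e^{-i\hat{x}_CP}$, since conjugation by the translation $e^{\pm i(2\pi/P)\hat{p}_C}$ shifts $\hat{x}_C$ by $\mp 2\pi/P$ and hence multiplies $e^{-i\hat{x}_CP}$ by $e^{\pm 2\pi i}=1$, yet it manifestly fails to commute with $\hat{x}_C$. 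So the equivalence asserted in your last paragraph is false as a standalone claim. To close the gap you must either read the theorem as requiring invariance for all $P$ (or for $P$ ranging over a set whose periods $2\pi/P$ have no common refinement), in which case intersecting the support loci leaves only the diagonal --- equivalently, differentiating $e^{-i\hat{x}_CP}\hat{O}_C^{(k)}e^{i\hat{x}_CP}=\hat{O}_C^{(k)}$ with respect to $P$ at $P=0$ yields the commutator directly --- or else supply a separate argument that the standard-invariance hypothesis already excludes the off-diagonal lines, which you do not give. It is worth noting that the paper's proof buries the same subtlety in the blanket claim that $e^{iX}Ye^{-iX}=Y$ is equivalent to $[X,Y]=0$, which likewise holds only for a one-parameter family of exponents; your explicit kernel computation actually exposes the issue, but you then draw the wrong conclusion from it.
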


The proof is provided in Appendix \ref{app: invariantOperators}. Note that the theorem extends straightforwardly to more systems with Hilbert space $\mathcal{H}^{(A)}_{BCD\dots}$ by subsuming all systems except for $C$ into one large subsystem $B'=BD\dots$.  Importantly, Theorem \ref{th: invariance} implies that not all operators which are invariant under the standard QRF transformation are also invariant under the generalised transformation for non-zero total momentum. For simple operators, which can be written as a tensor product of observables on $\mathcal{H}_B^{(A)}$ and $\mathcal{H}_C^{(A)}$, respectively, we can characterise the conditions under which an operator will fail to be invariant under the generalised QRF transformation by the following proposition, also proven in Appendix \ref{app: invariantOperators}.

\begin{proposition}
Given a non-zero Hermitian operator of the form $\hat{O}_{BC}^{(A)}=\hat{O}_B\otimes \hat{O}_C\in \mathcal{L}(\mathcal{H}_{B}^{(A)}\otimes \mathcal{H}_{C}^{(A)})$ that is invariant under the QRF transformation from $A$ to $C$ for zero charge sector, this invariance does \emph{not} extend to the transformation for non-zero charge sector if and only if $\{\hat{p}_B,\hat{O}_B\}=0$ and $\{\hat{x}_C,\hat{O}_C\}=0$.
\end{proposition}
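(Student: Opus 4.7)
The plan is to reduce both invariance statements to a single algebraic identity on $\mathcal{H}_B^{(A)}\otimes \mathcal{H}_C^{(A)}$ and then exploit the rigidity of non-zero simple tensors. Using $\hat{S}^P_{A\to C}=\hat{\mathcal{P}}_{AC}\,e^{i\hat{x}_C(\hat{p}_B-P)}$ together with $[\hat{x}_C,\hat{x}_C\hat{p}_B]=0$, the $P=0$ invariance of $\hat{O}^{(A)}_{BC}=\hat{O}_B\otimes\hat{O}_C$ is equivalent to $[\hat{x}_C\hat{p}_B,\,\hat{O}_B\otimes \hat{O}_C]=0$. Applying the standard tensor-product commutator identity
\begin{equation}
[X_1\otimes Y_1,\,X_2\otimes Y_2]=\tfrac{1}{2}\{X_1,X_2\}\otimes [Y_1,Y_2]+\tfrac{1}{2}[X_1,X_2]\otimes\{Y_1,Y_2\}
\end{equation}
with $X_1=\hat{p}_B$, $Y_1=\hat{x}_C$, $X_2=\hat{O}_B$, $Y_2=\hat{O}_C$, this condition takes the form
\begin{equation}
\{\hat{p}_B,\hat{O}_B\}\otimes [\hat{x}_C,\hat{O}_C]+[\hat{p}_B,\hat{O}_B]\otimes \{\hat{x}_C,\hat{O}_C\}=0. \tag{$\star$}
\end{equation}
Since $\hat{O}_B\otimes\hat{O}_C$ has Schmidt rank one, Theorem~\ref{th: invariance} identifies the failure of invariance to extend to $P\neq 0$ with $[\hat{x}_C,\hat{O}_C]\neq 0$. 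The proposition thus reduces to showing that, jointly with $(\star)$, the condition $[\hat{x}_C,\hat{O}_C]\neq 0$ is equivalent to $\{\hat{p}_B,\hat{O}_B\}=0$ and $\{\hat{x}_C,\hat{O}_C\}=0$.

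For the direction ($\Leftarrow$), the vanishing of both anticommutators makes both summands in $(\star)$ vanish. Moreover, $\{\hat{x}_C,\hat{O}_C\}=0$ together with $\hat{O}_C\neq 0$ implies $[\hat{x}_C,\hat{O}_C]=2\hat{x}_C\hat{O}_C\neq 0$, since $\hat{x}_C$ has trivial kernel on $L^2(\mathbb{R})$ (any $\hat{x}_C\hat{O}_C\ket{\psi}=0$ forces $\hat{O}_C\ket{\psi}=0$).

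For the direction ($\Rightarrow$), I would argue by contradiction. Assuming $(\star)$ and $[\hat{x}_C,\hat{O}_C]\neq 0$, suppose further that $\{\hat{p}_B,\hat{O}_B\}\neq 0$; then the first summand of $(\star)$ is a non-zero simple tensor, so the second must equal its negative, forcing $[\hat{p}_B,\hat{O}_B]\neq 0$ and $\{\hat{x}_C,\hat{O}_C\}\neq 0$, together with proportionalities
\begin{equation}
\{\hat{p}_B,\hat{O}_B\}=\lambda\,[\hat{p}_B,\hat{O}_B], \qquad [\hat{x}_C,\hat{O}_C]=-\lambda^{-1}\{\hat{x}_C,\hat{O}_C\}
\end{equation}
for some $\lambda\in\mathbb{C}^\times$ by uniqueness of factors of a non-zero simple tensor. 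Hermiticity of anticommutators and anti-Hermiticity of commutators forces $\lambda$ to be purely imaginary, and rearranging the first proportionality yields a twisted relation $\hat{p}_B\hat{O}_B=c\,\hat{O}_B\hat{p}_B$ with $|c|=1$ but $c\notin\mathbb{R}$. Taking momentum-eigenbra matrix elements produces $(p-cp')\langle p|\hat{O}_B|p'\rangle=0$ for all $p,p'\in\mathbb{R}$, which for non-real $c$ only allows support at $p=p'=0$ and hence forces $\hat{O}_B=0$, a contradiction. Thus $\{\hat{p}_B,\hat{O}_B\}=0$, and $(\star)$ reduces to $[\hat{p}_B,\hat{O}_B]\otimes \{\hat{x}_C,\hat{O}_C\}=0$; the analogous kernel argument for $\hat{p}_B$ shows $[\hat{p}_B,\hat{O}_B]\neq 0$ (otherwise $\hat{p}_B\hat{O}_B=0$ would give $\hat{O}_B=0$), whence $\{\hat{x}_C,\hat{O}_C\}=0$.

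The main obstacle is the step ruling out a twisted commutation relation $\hat{p}_B\hat{O}_B=c\,\hat{O}_B\hat{p}_B$ with $c$ a non-real unit-modulus number; this is where the continuous real spectrum of $\hat{p}_B$ on $L^2(\mathbb{R})$ is essential. On a system whose momentum spectrum were invariant under a non-trivial complex rotation, $(\star)$ could in principle be satisfied non-trivially without the anticommutators vanishing, and the equivalence stated in the proposition would not follow from purely algebraic manipulations alone.
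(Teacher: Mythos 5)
Your proof is correct and follows the same basic route as the paper's: expand the invariance condition for $P=0$ into $\{\hat{p}_B,\hat{O}_B\}\otimes[\hat{x}_C,\hat{O}_C]+[\hat{p}_B,\hat{O}_B]\otimes\{\hat{x}_C,\hat{O}_C\}=0$, use Theorem~\ref{th: invariance} to identify failure of $P\neq 0$ invariance with $[\hat{x}_C,\hat{O}_C]\neq 0$, and then rule out the ``mixed'' solution in which neither the commuting nor the anticommuting case holds. The one place where you genuinely diverge is in how that mixed case is excluded, and there your argument is actually tighter than the paper's. The paper posits a proportionality constant $\alpha\in\mathbb{R}$ and dismisses the mixed case on the grounds that a Hermitian operator cannot be a real multiple of a skew-Hermitian one; but the factorisation of a simple tensor only determines the constant up to an arbitrary element of $\mathbb{C}^\times$, and the Hermiticity argument alone does not exclude a purely imaginary constant. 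You correctly observe that Hermiticity forces $\lambda\in i\mathbb{R}$, and you then eliminate that remaining possibility by converting $\{\hat{p}_B,\hat{O}_B\}=\lambda[\hat{p}_B,\hat{O}_B]$ into the twisted relation $\hat{p}_B\hat{O}_B=c\,\hat{O}_B\hat{p}_B$ with $|c|=1$, $c\notin\mathbb{R}$, and using the real continuous spectrum of $\hat{p}_B$ to force $\hat{O}_B=0$. This spectral step is absent from the paper and closes a small gap in its case analysis; your closing remark that the equivalence is not purely algebraic but relies on the spectrum of $\hat{p}_B$ being real and not invariant under a non-trivial phase rotation is a genuine, and correct, refinement. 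The injectivity arguments you use for both directions ($\hat{x}_C$ and $\hat{p}_B$ having trivial kernel on $L^2(\mathbb{R})$) match the paper's observation that a non-zero operator cannot simultaneously commute and anticommute with $\hat{x}_C$.
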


A concrete example is provided by the parity operator.

\begin{example}
An example for an operator invariant under the standard but non-invariant under the generalised QRF transformation is the parity operator $\hat{\Pi}_B \otimes \hat{\Pi}_C$, with $\hat{\Pi} =\int dx |-x\rangle \langle x|$, which swaps the signs of both position and momentum. This satisfies $\{\hat{p}_B,\hat{\Pi}_B\}|\psi\rangle=(\hat{p}_B+\hat{\Pi}_B\hat{p}_B\hat{\Pi}_B^\dagger )\hat{\Pi}_B|\psi\rangle = (\hat{p}_B-\hat{p}_B)|\psi\rangle=0$ as well as $\{\hat{x}_C,\hat{\Pi}_C\}|\psi\rangle= (\hat{x}_C-\hat{x}_C)|\psi\rangle=0$. In particular, this implies invariance under the standard QRF transformation:
\begin{equation}
    e^{i\hat{x}_C\hat{p}_B}\hat{\Pi}_B\otimes\hat{\Pi}_Ce^{-i\hat{x}_C\hat{p}_B}=\hat{\Pi}_B\otimes\hat{\Pi}_Ce^{i(-\hat{x}_C)(-\hat{p}_B)}e^{-i\hat{x}_C\hat{p}_B} = \hat{\Pi}_B\otimes\hat{\Pi}_C.
\end{equation}
Yet, the operator $\hat{\Pi}_C$ does not commute with $\hat{x}_C$. Instead, we have $[\hat{x}_C,\hat{\Pi}_C]|\psi\rangle = (\hat{x}_C-\hat{\Pi}_C\hat{x}_C\hat{\Pi}_C^\dagger)\hat{\Pi}_C|\psi\rangle=2\hat{x}_C\hat{\Pi}_C|\psi\rangle$. Consequently, the state is not invariant under the generalised QRF transformation for non-zero momentum $P$:
\begin{equation}
    e^{i\hat{x}_C(\hat{p}_B-P)}\hat{\Pi}_B\otimes\hat{\Pi}_Ce^{-i\hat{x}_C(\hat{p}_B-P)}=\hat{\Pi}_B\otimes\hat{\Pi}_Ce^{i(-\hat{x}_C)(-\hat{p}_B-P)}e^{-i\hat{x}_C(\hat{p}_B-P)} = \hat{\Pi}_B\otimes\hat{\Pi}_Ce^{2i\hat{x}_CP}.
\end{equation}
\end{example}

Let us consider one more example, which is invariant under both the standard and the extended QRF transformations, and which will play an important role later in the discussion.
\begin{example}
    Consider a four-partite system and the following observable relative to $A$: $\hat{O}_{BCD}^{(A)}=\hat{x}_B\otimes \mathbb{I}_{CD}-\mathbb{I}_{BC}\otimes \hat{x}_D \equiv \hat{x}_B-\hat{x}_D$. Clearly, neither $\hat{x}_B$ nor $\hat{x}_D$ are invariant on their own. Nevertheless, their difference is invariant under both the standard QRF transformation and the generalised QRF transformation since
    \begin{align}
        e^{i\hat{x}_C(\hat{p}_B+\hat{p}_D-P)}\hat{O}_{BCD}^{(A)}e^{-i\hat{x}_C(\hat{p}_B+\hat{p}_D-P)}&=e^{i\hat{x}_C\hat{p}_B}\hat{x}_{B}e^{-i\hat{x}_C\hat{p}_B}\ -\ e^{i\hat{x}_C\hat{p}_D}\hat{x}_{D}e^{-i\hat{x}_C\hat{p}_D}\\&=(\hat{x}_B+\hat{x}_{C})-(\hat{x}_D+\hat{x}_{C})=\hat{x}_B-\hat{x}_D.
    \end{align}
\end{example}
The case of relative distances in a three-partite system is a bit more subtle, since we can only consider relative distances between system $B$ and one of the reference frames. In fact, the relative distance $\hat{x}_B-\hat{x}_C$ is not invariant in the above sense, neither under the standard nor under the generalised QRF transformation (from $A$ to $C$):
\begin{align}
    e^{i\hat{x}_C\hat{p}_B}(\hat{x}_B\otimes \mathbb{I}_C- \mathbb{I}_B\otimes\hat{x}_C)e^{-i\hat{x}_C\hat{p}_B}=(\hat{x}_B+\hat{x}_C)-\hat{x}_C=\hat{x}_B.
\end{align}
Similarly, 
    \begin{align}
        e^{i\hat{x}_C(\hat{p}_B-P)}(\hat{x}_B\otimes \mathbb{I}_C- \mathbb{I}_B\otimes\hat{x}_C)e^{-i\hat{x}_C(\hat{p}_B-P)}=(\hat{x}_B+\hat{x}_C)-\hat{x}_C=\hat{x}_B.
    \end{align}
However, if one takes into account that the position operator $\hat{S}^P_{A\to C}(\hat{x}_B-\hat{x}_C)(\hat{S}^P_{A\to C})^\dagger=\hat{q}_B$ in the frame of $C$ really describes the position of $B$ relative to the position of $C$, which just happens to be zero in its own frame, we recover again the right transformation properties. The same analysis holds, of course, for $\hat{q}_B-\hat{q}_A$ in the frame of $C$. This subtle case of relative distances between system $B$ and a reference frame will become relevant again in the discussion of the covariance of physical laws in Section \ref{sec:CovariancePhysicalLaws}.

\section{Accessibility Analysis: What Can Internal Observers Access?} \label{sec:accessibility}

Let us now address the question of what can be inferred from the internal perspectives about the respective configurations and the total momentum. To this end, we introduce a game run by an external referee, Eve, and played by two observers, Alice and Charlie, each equipped with a quantum reference frame $A$ and $C$, respectively, as defined in Section~\ref{sec:extension}. In addition, Alice and Charlie each have a classical register to store their measurement outcomes.\footnote{We assume that the way Alice and Charlie store the measurement outcomes is not affected by the QRF transformation. This is ensured if we assume that the register is composed of \emph{internal} degrees of freedom that do not change under classical and quantum translations.} The game proceeds sequentially over multiple rounds: at the beginning of each, Eve prepares the QRFs $A$ and $C$ together with the quantum system $B$ in a single pure state and then projects it into a single charge sector of her choice, and repeats this same preparation across rounds. In each round, either Alice performs measurements on $B$ and $C$ or Charlie performs measurements on $A$ and $B$. The task for the internal observers is to determine  which total momentum sector they are prepared in and, possibly, other global properties about the state prepared by Eve.

\begin{figure}[h!]
    \centering
    \includegraphics[width=0.9\linewidth]{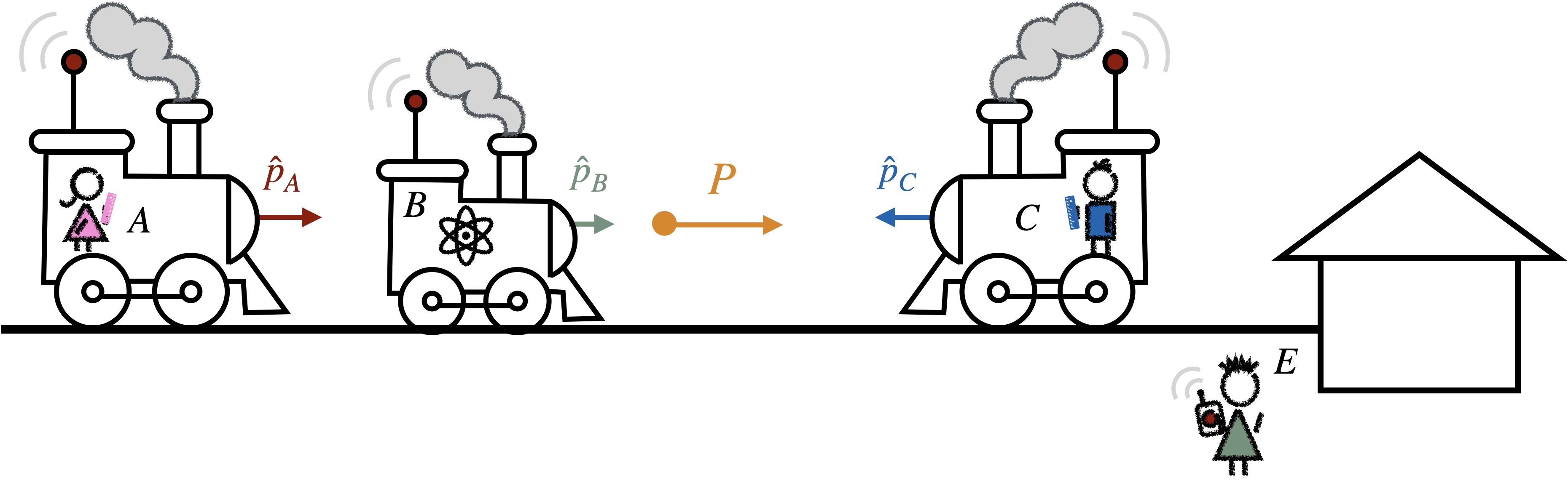}
    \caption{\emph{Illustration of the game run by an external referee Eve. At the beginning of each run, she prepares the reference frames $A$ and $C$, and an additional system $B$ in a configuration, imposing total momentum $\hat{P}=\hat{p}_A+\hat{p}_B+\hat{p}_C= P$, depicted as the (orange) arrow from the centre of mass. Then, either Alice or Charlie perform a measurement on the remaining systems on the tracks. Over time, they collect data, based on which they try to infer the total momentum $P$. Whether this strategy succeeds depends on the specific observables and resources available to them, characterised by three levels of increased accessibility in the main text.}
    \label{fig:sketch_game} }
\end{figure}

In what follows, we analyse what Alice and Charlie can infer 
by playing this game, gradually increasing the access and resources available to them. Before going into the technical details, let us anticipate a common source of confusion that stems from the rarity of a purely translation-invariant system. We can only speak about accessing the total momentum because, unlike many physical systems, the setup considered here does not involve an invariance under momentum translations -- we are dealing with the symmetry group $(\mathbb{R},+)$, not the (much more common) Galilei group. Accessing the total momentum thus does not break the relational spirit, which lies at the heart of the study of QRFs.

A physical example from the classical world to keep in mind in the subsequent discussion would be as follows: Consider a set of three trains, all moving at potentially different speeds on the tracks.\footnote{We thank Andrea Di Biagio for conceiving the original version of the train-track scenario illustrating systems that measure relative position and absolute momentum.} Each train represents one of the systems $A, B,$ and $C$. From within each train, one can send out light signals and thus measure the relative location of the other systems relative to oneself. Moreover, Alice and Charlie can each look at the other trains and infer their absolute momenta (with respect to the tracks), e.g.~by counting how many times the train wheels rotate in a given time frame. We assume that they cannot look down onto the wheels of their own train and thus cannot access their own absolute momentum. The tracks are taken to be perfectly homogeneous such that there is no visible position reference outside of the tracks relative to which the absolute location of any of the vehicles could be defined. Thus, we are dealing with a translation-invariant setup, in which there is nevertheless a sense of absolute momentum.\\

In what follows, let us assume that Eve prepares a state of the form
\begin{equation}\label{eq:psiKinEnt}
    \ket{\psi_{\kin}} = \frac{1}{\sqrt{2}}\left(\ket{x_1}_A\ket{z_1}_C+e^{i\varphi}\ket{x_2}_A\ket{z_2}_C\right)\ket{\phi}_B.
\end{equation}
The example state from the previous section corresponds to the special case $z_1=z_2 =z$. For greater generality, we now consider a scenario where Eve prepares $A$ and $C$ in an entangled state with a relative phase $\varphi$. In this case, after projecting on a single charge sector $P$ of her choice, the states in the perspectives of Alice and Charlie (conditioned on being at the origin) are 
\begin{align}
    \ket{\psi^P}^{(A)}_{BC}&=\frac{1}{\sqrt{2}} \int dy\phi(y)\left(e^{-ix_1P}\ket{y-x_1}_B\ket{z_1-x_1}_C+e^{i\varphi}e^{-ix_2P}\ket{y-x_2}_B\ket{z_2-x_2}_C\right)\nonumber\\
    &\equiv e^{-ix_1P}\frac{1}{\sqrt{2}}\left(\ket{\phi^{(1)}}^{(A)}_{BC}+e^{i\Phi^{(A)}}\ket{\phi^{(2)}}^{(A)}_{BC}\right),\label{eq:staterelAlice}\\
    \ket{\psi^P}^{(C)}_{AB}&=\frac{1}{\sqrt{2}} \int dy\phi(y)\left(e^{-iz_1P}\ket{x_1-z_1}_A\ket{y-z_1}_B+e^{i\varphi}e^{-iz_2P}\ket{x_2-z_2}_A\ket{y-z_2}_B\right)\nonumber\\
    &\equiv e^{-iz_1P}\frac{1}{\sqrt{2}}\left(\ket{\tilde{\phi}^{(1)}}^{(C)}_{AB}+e^{i\Phi^{(C)}}\ket{\tilde{\phi}^{(2)}}^{(C)}_{AB}\right),\label{eq:staterelCharlie}
\end{align}
where
\begin{align}
    \Phi^{(A)}&\equiv \varphi + (x_1-x_2)P \label{eq: phase relative to A},\\
    \Phi^{(C)}&\equiv \varphi + (z_1-z_2)P \label{eq: phase relative to C}.
\end{align}

We will use this example to discuss to what extent the internal observers can infer the values of the total momentum $P$ and the relative phase $\varphi$ \enquote{from inside}. We do so by progressively granting them access to more resources.

\subsection{Level 1: Relative Framed Observables Only}
\label{sec: level1}

At the most restrictive level, consider the case in which the internal observers Alice and Charlie have access only to \emph{relative framed} observables as defined in the operational approach to QRFs \cite{Loveridge_2017, Carette_2023}. Adapting the definitions to our three-particle setup, the \enquote{$A$-relative $C$-framed observables} of $B$ and $C$ relative to $A$ are observables of the form\footnote{To be precise, in \cite{Carette_2023}, the observables are defined as bounded operators of the form $\mathcal{O}_{BC,\op}^{(A)}=\yen^A\left(O_B\otimes E_C(X)\right)$, where the frame observable $E_C:B(G)\to\mathcal{B}(\mathcal{H}_C)$ is a \emph{covariant} POVM satisfying $E(g.X)=U(g)XU(g)^\ast$ for all $g\in G$ and $X\in B(\Sigma)$. The $\yen$-map, in turn, relativises this operator. Of course, the position operator is not a bounded operator on $L^2(\mathbb{R})$, hence the adapted definition above. Moreover, in the present work, we restrict ourselves to three copies of $L^2(G)$, corresponding to one system and two \emph{ideal} frames -- that is, two frames which are principal and sharp (cf.~\cite{Carette_2023}). This implies that these frames carry the covariant PVM of $G$, 
$E: B(\Sigma) \to \mathcal{L}(\mathcal{H}), \ 
E(X) = \int_X \ket{g}\bra{g}\, \mathrm{d}\mu(g), \ X \in B(\Sigma),$
with $\Sigma \cong G$ as manifolds, and $\mu$ the counting measure for discrete $G$ or the Haar measure for continuous $G$.}
\begin{align}
   \mathcal{O}_{BC,\op}^{(A)}=\int dx \ket{x}\bra{x}_A \otimes U^P_{BC}(x) \left(O_B\otimes |y\rangle\langle y|_C\right) U^P_{BC}(x)^\dagger
\end{align}
and linear combinations thereof, where $O_B\in \mathcal{L}(\mathcal{H}_B)$ and the projectors cover all $y\in \mathbb{R}$. These relative framed observables are elements of $\mathcal{L}(\mathcal{H}_A\otimes \mathcal{H}_B\otimes\mathcal{H}_C)^G$, the set of \emph{invariant} operators on the total Hilbert space.\footnote{Note that, strictly speaking, the $|x\rangle$ are not elements of $L^2(\mathbb{R})$ and thus the projector $|x\rangle\langle x|$ should be understood formally, to be made precise using methods from rigged Hilbert spaces \cite{gel2016generalized, schwartz1950theorie} (see also \cite{thiemannModernCanonicalQuantum2008, Giulini:1998kf, Giulini:1998rk} for methods on refined algebraic quantisation).} Moreover, the specific observables of the form $O_B \otimes |y\rangle\langle y|_C$ on $B$ and $C$ are called \emph{framed} observables. Similarly, relative framed observables relative to $C$  are of the form 
\begin{equation}
\mathcal{O}_{AB,\op}^{(C)} =\int dx\  U^P_{AB}(x)\left(|y\rangle\langle y|_A\otimes O_B\right) U^P_{AB}(x)^\dagger\otimes |x\rangle \langle x|_C .
\end{equation}
Importantly, note that $B$ is treated differently from $A$ and $C$: while the observers Alice and Charlie can measure any observable on $B$, they are more restricted when it comes to measurements on the other reference frame. For example, Alice can only measure $C$'s position but has no access to its momentum, and vice versa. In the absence of an additional system $B$, this means that only measurements of the relative orientation observable between the two frames are allowed \cite[pp.~30-31]{Carette_2023}. This assumption fits with the intuition behind the operational approach, in which the choice of QRF is \enquote{akin to fixing a measuring apparatus and choosing a particular pointer observable} \cite[p.~3]{Carette_2023}. By contrast, other QRF approaches, in particular the perspectival, the perspective-neutral, and the extra-particle approaches, allow more freedom in the choice of observable that can be measured on the other reference frame -- the state of the latter can, in general, be measured in multiple bases.

In the present scenario, the position of the reference frames $A$ resp.~$C$ plays a preferred role as the pointer observable relative to which the observables of $B$ are measured. Crucially, the individual momentum operators $\hat{p}_A$ and $\hat{p}_C$ of the reference frames themselves are \emph{not} relative framed observables, even though they commute with the total momentum. This explains why, in the operational approach, one cannot distinguish between a superposition or a mixture of position eigenstates of one reference frame relative to another \cite{Carette_2023} -- to verify such a superposition, one would need access to an incompatible observable, such as the momentum of the reference frame. The same reasoning applies to the question of whether Alice and Charlie can infer the total charge $P$. The latter is encoded in the relative phase $\Phi^{(A)}$ resp.~$\Phi^{(C)}$, which cannot be accessed by means of relative distance measurements as seen from reference frames $A$ and $C$ alone. \emph{Thus, given access to only relative framed observables, Alice and Charlie cannot infer the total charge.}

\subsection{Level 2: Relational Observables}
\label{sec: level2}

Let us next restrict the observables that the internal observers Alice and Charlie can measure to those that are strictly in the algebra of relational observables within the (extended) perspective-neutral approach. The observables that Alice can access at this level are defined relative to the \emph{kinematical} Hilbert space $\mathcal{H}_{\kin}=\mathcal{H}_A\otimes\mathcal{H}_B\otimes\mathcal{H}_C$ and are of the form
\begin{align}
   \mathcal{O}_{BC,\PN}^{(A)}=\int dx \ket{x}\bra{x}_A \otimes U^P_{BC}(x) (O_{BC}) U^P_{BC}(x)^\dagger
\end{align}
and linear combinations thereof, for $O_{BC}\in \mathcal{L}(\mathcal{H}_B\otimes 
\mathcal{H}_C)$ arbitrary observables on $B$ and $C$. Similarly, the observables that Charlie can access at this level are of the form
\begin{align}
    \mathcal{O}_{AB,\PN}^{(C)}=\int dx  U^P_{AB}(x) (O_{AB}) U^P_{AB}(x)^\dagger\otimes \ket{x}\bra{x}_C 
\end{align} and linear combinations thereof, for $O_{AB}\in \mathcal{L}(\mathcal{H}_A\otimes 
\mathcal{H}_B)$. Note here the crucial difference to Level 1: now, all linear operators $O_{BC}$ resp. $O_{AB}$ are allowed and not just the framed observables of the form $O_B\otimes |y\rangle\langle y|_C$ resp. $|y\rangle\langle y|_A\otimes O_B$. \\

Let us get a better understanding of the observables an internal QRF can thus access on the other subsystems. For this, we start with the position operator of $C$ relative to $A$, following Example 5 in \cite{delaHamette_2021_perspectiveneutral}. Denoting by $\hat{q}_A$ the position operator of $A$, we obtain the following relational observable encoding the position $C$, conditioned on the frame $A$ being at position zero:
\begin{align}
   F_{\hat{q}_C,A}(0)&=\int dx \ket{x}\bra{x}_A \otimes \mathbb{I}_B\otimes e^{-ix(\hat{p}_C-P)}\hat{q}_C e^{ix(\hat{p}_C-P)} \\
   &= \int dx \ket{x}\bra{x}_A \otimes \mathbb{I}_B \otimes (q_C+x) = \hat{q}_C-\hat{q}_A,
\end{align}
where, in the last line, we omitted the identities on the other subsystems for notational clarity. Next, let us check whether Alice can access the momentum operator of $C$:
\begin{align}
   F_{\hat{p}_C,A}(0)&=\int dx \ket{x}\bra{x}_A \otimes \mathbb{I}_B\otimes e^{-ix(\hat{p}_C-P)}\hat{p}_C e^{ix(\hat{p}_C-P)} \\
   &= \int dx \ket{x}\bra{x}_A \otimes \mathbb{I}_B\otimes \hat{p}_C = \hat{p}_C.
\end{align}
We find that the momentum of $C$ lies in the algebra of invariant observables, accessible to Alice. The same argument holds for the momentum of $A$ -- it is accessible to Charlie. Moreover, it is easy to see that both Alice and Charlie can access the momentum of $B$, $\hat{p}_B$. Yet, without communication, they cannot combine these results to infer the total momentum $P$.

More generally, we can consider rotated quadratures of the form $\cos(\theta)(\hat{q}_C - \hat{q}_A) + \sin(\theta)\hat{p}_C$ for arbitrary $\theta$ \cite{Vogel_1989}. In order to show that these are invariant, we follow the same strategy as in the perspective-neutral approach \cite{delaHamette_2021_perspectiveneutral}. That is, we use the fact that the coherent and incoherent $G$-twirls are equivalent on observables on the physical Hilbert space\footnote{The proof works for any unitary representation and thus in particular for $U^P(x)$.} \cite{delaHamette_2021_perspectiveneutral}, and show that under the incoherent modified $G$-twirl, the above observables transform as
\begin{align}
   \mathcal{G}_{incoh}^P[\cos(\theta)(\hat{q}_C - \hat{q}_A) + \sin(\theta)\hat{p}_C] &= \int dx U^P_{AC}(x) [\cos(\theta)(\hat{q}_C - \hat{q}_A ) + \sin(\theta)\hat{p}_C] U^P_{AC}(x)^\dagger \\
   &= \cos(\theta)(\hat{q}_C - \hat{q}_A) + \sin(\theta)\hat{p}_C,
\end{align}
showing that all such rotated quadratures are invariant under the $G$-twirl operation and therefore accessible to the internal observers. This access to all rotated quadratures enables Alice to perform {\em complete tomographic reconstruction} of Charlie's quantum state relative to Alice's reference frame. The key insight is that Alice can measure the full set of observables $\{\cos(\theta)(\hat{q}_C - \hat{q}_A) + \sin(\theta)\hat{p}_C : \theta \in [0, 2\pi)\}$, which constitutes a tomographically complete set for reconstructing Charlie's relative state. Evidently, the same argument holds for observables on system $B$ and for correlations between quadrature observables measured on $C$ and $B$. Therefore, Alice can perform complete tomography on the relative state $\ket{\psi}_{BC}^{(A)}$. Analogous reasoning implies that Charlie can perform complete tomography on $\ket{\psi}_{AB}^{(C)}$.\\

In the perspectival approach, the relational observables are encoded very naturally in the Hilbert space relative to the QRF: For ideal QRFs, which we are dealing with here, the set of relational observables as in the perspective-neutral approach is isomorphic to the set of {\em all} possible observables on the Hilbert space of $B$ and $C$ relative to $A$. That is, Alice has access to the observables
\begin{align}
\mathcal{O}_{BC,\persp}^{(A)}=O_{BC} \in \mathcal{L}(\mathcal{H}_B^{(A)}\otimes \mathcal{H}_C^{(A)}).
\end{align}
Similarly, Charlie has access to all observables 
\begin{align}
\mathcal{O}_{AB,\persp}^{(C)}=O_{AB} \in \mathcal{L}(\mathcal{H}_A^{(C)}\otimes \mathcal{H}_B^{(C)}).
\end{align}
This capability has always been implicit in the perspectival approach to quantum reference frames. When we write relative states as pure states in the literature, this implicitly assumes that the internal observers have access to observables that allow them to verify these pure state assignments. Without access to observables conjugate to position, such as momentum, one would be forced to assign mixed states to other systems, contradicting the standard treatment for perspectival QRFs. Note, though, that this is precisely what happens in the operational approach -- without access to the conjugate momentum, mixtures and pure states cannot be distinguished by the accessible observables \cite{Carette_2023, Castro-Ruiz_2025}.\\

So, Alice can perform complete tomography of the relative state of Charlie and system $B$ in both the perspectival and the perspective-neutral approach. While this lets her determine the phase $\Phi^{(A)}$ (see Eq.~\eqref{eq: phase relative to A}), she cannot recover the individual parameters entering that phase: the global charge sector $P$, the difference between her own absolute positions relative to Eve, $x_1-x_2$, and the relative phase $\varphi$. Since Alice cannot access Eve's frame, she cannot isolate the total momentum from her measurement results alone. In particular, her measurements depend statistically on $P$ but remain coupled to inaccessible absolute information, preventing direct inference of the charge sector value by individual observers. \emph{Thus, given access to all relational observables in the perspective-neutral approach viz.~all observables in the perspectival approach, but without communication between the reference frames, Alice and Charlie still cannot infer the total charge.}

\subsection{Level 3: Inter-Frame Communication}
\label{sec: level3}

At this level, we allow internal observers Alice and Charlie to communicate and share their measurement results from Level 2. This collaboration enables them to extract global information that was inaccessible to them individually.\\

In fact, now that Alice and Charlie can communicate classically, determining the value of the total momentum $P$ is rather straightforward. Since the global state is restricted to a fixed total-momentum sector, the physical-state constraint implies
$\langle \hat p_A \rangle + \langle \hat p_B \rangle + \langle \hat p_C \rangle = P$. Provided that Alice and Charlie are able to exchange their measurement results, Alice can estimate $\langle \hat p_C \rangle$ (and $\langle \hat p_B \rangle$) from her perspective, Charlie can estimate $\langle \hat p_A \rangle$ (and $\langle \hat p_B \rangle$) from his, and by communicating their results they can directly reconstruct $P=\langle P \rangle = \langle p_A \rangle + \langle p_B \rangle + \langle p_C \rangle$. The interest of the present analysis, however, does not lie in the mere determination of $P$, but in understanding which other global properties, beyond such linear constraints, can be operationally accessed from within the system. In particular, while not strictly necessary for determining $P$, extending the tomography-based strategy from Level 2 will allow us to understand better how information about global structure is encoded in the \emph{relation} between QRF perspectives: In the quantum case, this relation is captured by the $P$-dependent QRF transformation $\hat S^{(P)}_{A\to C}$, and the central question is whether internal observers can operationally reconstruct this full unitary relation between their perspectives.\\

From their respective relational tomographical capabilities at Level 2, Alice and Charlie can each determine phase information from the relative states they assign to each other and system $B$. To repeat, Alice measures a phase
\begin{align}
   \Phi^{(A)} &= \varphi +(x_1-x_2)P
\end{align}
while Charlie independently measures
\begin{align}
   \Phi^{(C)} &= \varphi +(z_1-z_2)P.
\end{align}
It is now relatively straightforward to devise a formula through which Alice and Charlie can combine their information to infer the total charge. In particular,  they can now compute the difference: 
\begin{align}
   \Delta \Phi &= \Phi^{(A)} - \Phi^{(C)} = P ((x_2-z_2)-(x_1-z_1)).
\end{align}
This phase difference $\Delta \Phi$ can be inferred by comparing their measurement results for $\Phi^{(A)}$ and $\Phi^{(C)}$ through classical communication. Crucially, the global phase $\varphi$ cancels out in this difference. Since Alice and Charlie can also measure the relative distances $(x_1-z_1)$ and $(x_2-z_2)$ through the corresponding relational observables $F_{\hat{q}_A,C}(0)$ or $-F_{\hat{q}_C,A}(0)$, they have access to both $\Delta \Phi$ and $((x_2-z_2)-(x_1-z_1))$. This allows them to extract the charge sector value:
\begin{align}\label{eq:PfromDeltaPhi}
   P = \frac{\Delta \Phi}{(x_2-z_2)-(x_1-z_1)},
\end{align}
where we assume that $x_1-x_2 \neq z_1-z_2$.\\

Although the phase-difference method cancels $\varphi$ and singles out $P$, it still presumes a particular preparation relative to Eve (a state of the form given in Eq.~\eqref{eq:psiKinEnt}). To see how Alice and Charlie can infer the total momentum $P$ even if they are prepared in an arbitrary state relative to Eve, consider the following general relative state of $B$ and $C$ with respect to $A$ (cf.~Eq.~\eqref{eq:GenRelStateA}) and rewrite it using $u=y-x$, $v=z-x$:
\begin{equation}
\ket{\psi}_{BC}^{(A)} = \int dx dy dz\ e^{-iPx}\psi(x,y,z) \ket{y-x}_B \ket{z-x}_C = \int du dv\ \Phi_A(u,v;P)
\ket{u}_B\otimes\ket{v}_C ,
\end{equation}
with $\Phi_A(u,v;P) = \int dx e^{-iPx} \psi(x,x+u,x+v)$. Similarly, the relative state of $A$ and $B$ with respect to $C$ (cf.~Eq.~\eqref{eq:genRelStateC}) can be rewritten using $s=x-z$, $t=y-z$ to obtain
\begin{equation}
\ket{\psi}_{AB}^{(C)} = \int dx dy dz\ e^{-iPz}\psi(x,y,z) \ket{x-z}_A \ket{y-z}_B = \int ds dt\ \Phi_C(s,t;P)
\ket{s}_A\otimes\ket{t}_B ,
\end{equation}
with $\Phi_C(s,t;P) = \int dz e^{-iPz}\psi(z+s, z+t, z)$. As discussed above, Alice and Charlie can each reconstruct these states via complete tomography. They only determine them up to global phases $e^{i\theta_A}$ and ~$e^{i\theta_C}$, respectively. A change of variables shows that
\begin{align} \label{eq:relationrelativestates}
e^{i\theta_C} \Phi_C(s,t;P) = e^{i(\theta_A+Ps)}\Phi_A(t-s,-s;P).
\end{align}

To determine $P$, Charlie can pick two distinct pairs $(s_i,t_i)$ (for $i=1,2$) for which $\Phi_C(s_i,t_i;P)$ and $\Phi_A(t_i-s_i,-s_i;P)$ do not vanish, communicate them to Alice, and together verify
\begin{align}
    R(s_i,t_i):= \frac{\Phi_C(s_i,t_i;P)}{\Phi_A(t_i-s_i,-s_i;P)} = e^{iPs_i}e^{i(\theta_A-\theta_C)}.
\end{align}
From the ratio of two such relations one obtains
\begin{align}
    P=\frac{i}{s_2-s_1}\log(\frac{R(s_1,t_1)}{R(s_2,t_2)})= \frac{\operatorname{Arg}R(s_{2},t_{2}) - \operatorname{Arg}R(s_{1},t_{1})}{s_{2}-s_{1}}.
\end{align}
If $\Phi_A$ and $\Phi_C$ are continuous and differentiable functions, it is convenient to define 
\begin{align}
   R(s,t) := \frac{\Phi_C(s,t;P)}{\Phi_A(t-s,-s;P)} = e^{iPs}e^{i(\theta_A-\theta_C)}. 
\end{align}
for those $(s,t)$ where $\Phi_A$ does not vanish. Assuming differentiability of $R$, we find
\begin{align}
    P = \frac{\partial}{\partial s}\arg R(s,t).
\end{align}

We thus conclude: whenever there exist at least two pairs $(s,t)$ with $s_1 \neq s_2$ for which both $\Phi_A$ and $\Phi_C$ are non-zero, Alice and Charlie can reconstruct $P$. The only case in which this fails is when $A$ and $C$ are perfectly localised relative to one another. In this case, the phase $e^{iPs}$ is unobservable. This highlights a specifically quantum feature of reference frames: unlike classical frames, which are always sharply localised, QRFs and the transformations between them can carry information about global properties such as $P$. This information can then be inferred locally and combined across different perspectives.\\

It is worth noting that the general relation of Eq.~\eqref{eq:relationrelativestates} can be obtained in two conceptually distinct ways: either from the global picture, using the extended perspective-neutral framework or by modifying the canonical transformations in the perspective-dependent formalism. The strategy presented here works because correlations between different internal perspectives isolate the global charge information while eliminating frame-dependent contributions. This collaborative access contrasts with the limitations of Levels 0 and 1, showing that communication between internal frames can reveal global properties hidden from individual observers.\\

Note, however, that while Alice and Charlie can identify the global charge sector, they cannot determine the relative phase $\varphi$ that characterises the superposition of $A$ and $C$ in Eve's preparation. This phase cancels out in their collaborative measurement strategy and remains fundamentally inaccessible from within the system. \emph{Thus, through collaboration via classical communication, internal observers can achieve complete charge sector identification but lack access to the phase $\varphi$.}\\

Note further that the accessibility game studied in this subsection presumes that Eve prepares the remaining systems in a fixed state. But what if they are not given the promise that Eve prepares the same state over and over again? Instead, she could prepare a collection of \emph{distinct} states in the same charge sector over multiple rounds. In this case, they could still infer the total momentum via process tomography of an unknown unitary on a continuous-variable system. Since the latter requires, in principle, an unbounded set of input-output probes (with precision improving as this set grows), Eve would have to prepare, for any target accuracy, a sufficiently large collection of linearly independent states that share the same total momentum across runs. Alice and Charlie then carry out full state tomography of each input-output state pair, reconstruct the unitary, and read off the selected charge. 

In Appendix \ref{app:perspectivalGame}, we introduce a purely perspectival game in which Alice sends states to Charlie, and they must guess which global charge their states are compatible with -- without ever assuming that the entire state is defined by a hypothetical Eve.\\

A final caveat remains: The above reasoning presumes pure states relative to $A$ and $C$, which can only come about if Eve prepares all the systems in a pure state. If, for instance, Eve were to prepare a mixed state of $A$, $B$, and $C$ in charge sector $P$, or if, for some reason, Alice and Charlie do not have access to all other subsystems, they would reconstruct mixed relative states and the charge $P$ cannot -- or can only partially -- be reconstructed.

\subsection{Relation to the 'Extra Particle' approach}
\label{sec: extraParticle}

There is yet another, more permissive, approach to QRFs: the extra-particle approach of \cite{Castro_Ruiz_2021, Castro-Ruiz_2025}. In this approach, one allows access to all invariant operators, including the global charge. That is, Alice or Charlie alone can access the charge sector in which they reside, by means of an \enquote{extra particle}. This additional degree of freedom ensures the invertibility of a more general set of QRF transformations than the ones obtained in the perspective-neutral or perspectival approach. In the case of the translation group, the extra particle is precisely the total momentum.\footnote{More generally, the extra particle is described by elements in the commutant of the algebra relative to the chosen reference frame within the algebra of invariant observables. One can understand it to capture the invariant properties of the reference frame itself \cite{Castro_Ruiz_2021}.} It is, by construction, in principle accessible to both Alice and Charlie so that they can infer the total momentum without communication and by, essentially, performing a single measurement: the state of the extra particle.
Applied to our setup, the Hilbert space relative to Alice in the extra particle approach is given by
\begin{equation}
\mathcal{H}^{|A}=\int^\oplus dp\ \mathcal{H}_B^{(p)}\otimes \mathcal{H}_C^{(p)}.
\end{equation}
Formally, one could also write the Hilbert space decomposition as $\mathcal{H}^{|A}=\int^\oplus dp\ \mathcal{H}_p\otimes \mathcal{H}_B^{(p)}\otimes \mathcal{H}_C^{(p)}$. However, the overall translation invariance implies a superselection rule that prevents (the verification of) superpositions of different total momenta $p$. Thus, the Hilbert space of the extra particle is simply $\mathcal{H}_p\simeq \mathbb{C}$, that is, merely a number that labels the charge sector rather than a proper quantum degree of freedom. A QRF transformation in the extra-particle approach can be seen to act separately on each charge sector via (\cite[p.~23]{Castro_Ruiz_2021}, adapted to our notation)\footnote{Note that $\hat{p}$ acts on $\H_p$ and should not be confused with the total momentum operator $\hat{P}$ acting on $\H_A\otimes \H_B \otimes \H_C$ -- since the QRF transformation $\hat{S}_{A\to C}$ maps between $\mathcal{H}^{|A}$ and $\mathcal{H}^{|C}$, this would not even be well-defined. Moreover, if one tried to define the map on the kinematical Hilbert space, it would not be self-adjoint.}
\begin{equation}
\hat{S}_{A\to C}=\mathcal{P}_{A\to C}e^{i\hat{x}_C(\hat{p}_B-\hat{p})}.
\end{equation}
This closely resembles the transformation in the extended perspectival and perspective-neutral approaches except that the total momentum is now promoted to a quantum operator, giving rise to a different phase depending on the charge sector. Moreover, assuming that Alice and Charlie have access to all observables that commute with the constraint, this means that they also have access to observables that read out the charge sector. This might be implemented, for example, through diagonal projectors of the form $\Pi_{p}=(\boxed{0}, \dots,\boxed{0},\boxed{\mathbb{I}},\boxed{0},\dots,\boxed{0})$, where each block refers to $\mathcal{H}_B^{(p)}\otimes \mathcal{H}_C^{(p)}$ for a specific $p$.

The fact that Alice and Charlie can now directly access the total momentum raises the question whether this allows them to access further global quantities, previously unavailable to them. In particular, one might wonder whether they can access the phase $\varphi$ in the state $\ket{\psi_{\kin}}=\frac{1}{\sqrt{2}}\left(\ket{x_1}_A\ket{z_1}_C+e^{i\varphi}\ket{x_2}_A\ket{z_2}_C\right)\ket{\phi}_B$ relative to Eve. Note that in order to properly capture the extra-particle approach, we have to modify the game slightly, allowing Eve to prepare a superposition or mixture of different total momenta (which is why the discussion in this subsection does not correspond to a \enquote{Level 4}).  In this approach, one would first apply an incoherent $G$-twirl to the resulting kinematical state, thereby obtaining a state that has support on the direct sum of \emph{all} charge sectors. Thus, in order to measure a specific phase $\Phi^{(A)}$, one would have to first measure and postselect on a specific outcome $P$. Then, as at Level 2 and 3, one can obtain the phases
\begin{align}
    \Phi^{(A)}\equiv \varphi + (x_1-x_2)P ,\\
    \Phi^{(C)}\equiv \varphi + (z_1-z_2)P,
\end{align}
which we repeat here for convenience, through full state tomography. However, if Alice and Charlie postselect on a single value of $P$ only, this is not sufficient to separate $\varphi$ from the remaining terms since they have no access to the difference between their absolute positions relative to Eve. By postselecting on different values of $P$, say $P_1$ and $P_2$, Alice could measure $\Phi^{(A)}_{{P_1}}$ and $\Phi^{(A)}_{{P_2}}$, determine
\begin{equation}
    \Phi^{(A)}_{{P_2}}-\Phi^{(A)}_{{P_1}}=(x_1-x_2)(P_1-P_2),
\end{equation}
and from this infer $x_1-x_2$. Note that this constitutes inter-branch information, as it quantifies the difference in $A$'s position between the two branches of the superposition as prepared by Eve. This would give her enough information to determine
\begin{equation}
    \varphi = \Phi^{(A)}_{P_1}-(x_1-x_2)P_1.
\end{equation}
Note, however, that this strategy involves a modification of the game set up at the beginning of this section, in particular one where Eve does not prepare the total state in a single charge sector only (see also Discussion \ref{sec:SuperpositionChargeSectors}). It remains to be investigated to what extent the perspectival and perspective-neutral approaches can be extended to such a scenario (see Discussion Subsection \ref{sec:SuperpositionChargeSectors}) and which additional global properties beyond $P$, such as $\varphi$, would be accessible under the new rules of the game in this case.

\section{Discussion} \label{sec:discussion}

\subsection{Comparison Across Quantum Reference Frame Approaches}

The accessibility analysis of the previous section highlights an important feature of the different approaches to QRFs and their relationships, which has been emphasised in several previous works \cite{Krumm_2021, Hoehn_2021, Carette_2023, Castro_Ruiz_2021, delaHamette_2021_perspectiveneutral,Hoehn_2023_subsystems, devuyst2025relation,Castro-Ruiz_2025}: At the conceptual level, the main difference between the various approaches lies in what is deemed accessible from a given perspective. It is important to keep in mind, though, that our analysis does not compare different QRF approaches as a whole. Instead, we consider access to different sets of observables, which are inspired by those accessible in the various QRF approaches. Throughout our analysis, the states and QRF transformation remain the same -- we focus solely on the extended perspectival/perspective-neutral QRF transformation introduced in Section \ref{sec:extension}.
Nevertheless, our results are reflective of a recent analysis by Castro-Ruiz, Galley, and Loveridge \cite{Castro-Ruiz_2025}, which illustrates the difference between the various QRF approaches in the context of a simple example: a three-partite system with $\mathbb{Z}_2$-symmetry. There, the authors showcase how different assumptions about what one observer can infer about the state relative to the other lead to different QRF transformations. Firstly, assuming that they can only have partial knowledge about one another leads to the \emph{operational approach}. Secondly, restricting the set of global states by imposing zero total charge to the \emph{perspective-neutral and perspectival approach}. Note that this still holds for the extended versions of the perspective-neutral and perspectival approaches developed in this work, since the QRF transformation is still unitary. Thirdly, requiring a single one-to-one map between the perspectives that works across all charge sectors yields the \emph{extra-particle approach}  \cite[pp. 5-6]{Castro-Ruiz_2025}.

Despite employing a different symmetry group, our results can be seen to complement these insights. Rather than starting with different assumptions about what the observers can infer about each other's quantum states, we investigate whether they can infer the total charge (i.e., the total momentum) given access to different sets of observables within their own perspective and considering the possibility of classical communication.

We find that restricting to relative-framed observables (\hyperref[sec: level1]{Level 1}), as in the \emph{operational approach}, prevents the observers from performing full state tomography, making it impossible to infer the total momentum. Conversely, if the total momentum is not accessible to the observers, they cannot uniquely determine how their states transform relative to each other; this justifies the averaging procedure in the QRF transformation used in \cite{Carette_2023,Castro-Ruiz_2025}.  

Allowing access to all relational observables, as in the perspective-neutral approach, or all observables in the relative Hilbert space, as in the perspectival approach (\hyperref[sec: level2]{Level 2}), allows the observers to fully reconstruct the states relative to themselves. Yet, due to their inability to isolate the total momentum, it is only when they can communicate (\hyperref[sec: level3]{Level 3}) that they can infer the total momentum $P$. By allowing for non-zero total momentum, we go beyond the standard perspective-neutral and perspectival formalisms. Yet, the argument of \cite{Castro-Ruiz_2025} remains valid: the restriction to a specific charge sector allows the observers to infer each other's states if they know the total charge -- this is simply the reverse of our reasoning.

Finally, as we briefly discussed at the end (Subsection \ref{sec: extraParticle}), this inference becomes straightforward in the \emph{extra-particle approach}, where the total momentum is encoded in the extra particle and thus directly accessible to each observer without the need for inter-party communication. Again, this aligns with the motivation for the extra-particle approach, which is to find a single map relating states across different reference frames while allowing for arbitrary total charge \cite{Castro_Ruiz_2021,Castro-Ruiz_2025}. Moreover, if one were to modify the initial setup of the game such that Eve can prepare all systems in a superposition or mixture of charge sectors, Alice and Charlie could also infer the relative phase $\varphi$ between the position states of $A$ and $C$ relative to Eve.\\

The results from our accessibility analysis and, in particular, which observables are deemed accessible in the different approaches, are summarised in Table \ref{tab:accesibleObservables} below. 

\begin{table}[h!]
    \centering
    \renewcommand{\arraystretch}{1.3} % increase row height
    \setlength{\tabcolsep}{6pt}       % adjust column spacing
    \begin{tabular}{| l | l | l | l |}
    \hline
         \textbf{Approach} & \textbf{Perspective of Alice} & \textbf{Perspective of Charlie} & \textbf{With communication} \\
    \hline
         Operational & 
         $\hat{x}_B-\hat{x}_A$, $\hat{x}_C-\hat{x}_A$, $\hat{p}_B$ & 
         $\hat{x}_A-\hat{x}_C$, $\hat{x}_B-\hat{x}_C$, $\hat{p}_B$ & 
         rel.\ positions, $\hat{p}_B$ \\
    \hline
         Perspective-neutral & 
         $\hat{x}_B-\hat{x}_A$, $\hat{x}_C-\hat{x}_A$, $\hat{p}_B$, $\hat{p}_C$ & 
         $\hat{x}_A-\hat{x}_C$, $\hat{x}_B-\hat{x}_C$, $\hat{p}_A$, $\hat{p}_B$ & 
         rel.\ positions, $\hat{p}_A$, $\hat{p}_B$, $\hat{p}_C$, $P$ \\
    \hline
         Perspectival & 
         $\hat{x}_B$, $\hat{x}_C$, $\hat{p}_B$, $\hat{p}_C$ & 
         $\hat{x}_A$, $\hat{x}_B$, $\hat{p}_A$, $\hat{p}_B$ & 
         (rel.) positions, $\hat{p}_A$, $\hat{p}_B$, $\hat{p}_C$, $P$ \\
    \hline
         Extra-Particle & 
         $\hat{x}_B-\hat{x}_A$, $\hat{x}_C-\hat{x}_A$, $\hat{p}_B$, $\hat{p}_C$, $P$\textcolor{gray}{, $\varphi$} & 
         $\hat{x}_A-\hat{x}_C$, $\hat{x}_B-\hat{x}_C$, $\hat{p}_A$, $\hat{p}_B$, $P$\textcolor{gray}{, $\varphi$} & 
         rel.\ positions, $\hat{p}_A$, $\hat{p}_B$, $\hat{p}_C$, $P$\textcolor{gray}{, $\varphi$} \\
    \hline
    \end{tabular}
    \caption{Linear combinations of the above observables are accessible from within Alice's and Charlie's perspective in different approaches to QRFs. Note that while represented differently in the table above, the (relative) position operators in the perspective-neutral and perspectival approaches are isomorphic, e.g.\ $\hat{x}_B-\hat{x}_A$ in the perspective-neutral approach corresponds to $\hat{x}_B$ in the perspectival one. Importantly, the total momentum $P$, the eigenvalue of $\hat{p}_A+\hat{p}_B+\hat{p}_C$, is accessible in the perspective-neutral and perspectival approach only if the observers combine their results. In the extra-particle approach, it is directly accessible even without communication. Moreover, allowing for a superposition or mixture of different charge sectors (indicated by grey colour), Alice and Charlie could even infer the phase $\varphi$. It remains to be investigated to what extent the perspectival and perspective-neutral approach can be extended to such a scenario.}
    \label{tab:accesibleObservables}
\end{table}

\subsection{Comment on the Covariance of Physical Laws} \label{sec:CovariancePhysicalLaws}

Our results also relate to recent work by Mekonnen, Galley and Müller \cite{Mekonnen_2025}, which, inter alia, discusses the possibility of using different representations of the translation group, depending on the total momentum $P$. This leads to the same transformation as our generalised QRF transformation \eqref{eq: genQRFtrafo}. They then apply this transformation to the setup considered in \cite{delaHamette_2021_indefinitemetric}: a test particle in the presence of a massive object in superposition. In this work, we used the principle of \emph{covariance of physical laws under QRF transformations} (cf.~\cite{Giacomini_2017_covariance}) to argue that one can use a QRF transformation to go into the reference frame of the massive object, in which it will assume a definite position and thus source a definite gravitational field, solve the problem in this reference frame, and then use the inverse QRF transformation to infer the motion of the test particle in the original frame of reference. A key ingredient in this argument was the \emph{form-invariance of the Hamiltonian}. Neglecting the momentum of the massive object as well as the original QRF $R$, the Hamiltonian in the respective perspectives is given by
\begin{align}
        \hat{H}_{RS}^{(M)} &=  \frac{\hat{\vec{\pi}}_S^2}{2m_S} + m_S \hat{V}(\hat{\vec{q}}_S),\label{eq:HamiltonianM}\\
		\hat{H}_{MS}^{(R)} &= \frac{\hat{\vec{p}}_S^2}{2m_S} + m_S V(\hat{\vec{x}}_S-\hat{\vec{x}}_M). \label{eq:HamiltonianR}
\end{align}
These operators are invariant under QRF transformations, with the caveat that the position of $R$ is set to zero in its own frame (see also the end of Section \ref{sec:invariantobservables}). Importantly, $\hat{H}^{(M)}_{RS}$ transforms in precisely the same way under the generalised QRF transformation $\hat{S}_{M\to R}^P=\mathcal{P}_{MR}e^{i\hat{\vec{q}}_R(\hat{\vec{\pi}}_S-P)}$ as under the standard one, $\hat{S}_{M\to R}^0$, since the Hamiltonian commutes with $\hat{q}_R$. That is, the argument extends unproblematically to the non-zero charge sector. This is in contrast with the analysis of \cite{Mekonnen_2025}, who argue that \enquote{physics is only invariant under a single representation of the quantum symmetry group and not under all QRF transformations}. How can one explain this difference? While the authors of \cite{Mekonnen_2025} do not explicitly state how they compute the evolution, they drop the assumption that the reference system $R$ is far away enough such that it is not affected by the gravitational field of $M$. This is a crucial assumption in \cite{delaHamette_2021_indefinitemetric}, which justifies neglecting the momentum $\hat{p}_R$ in the Hamiltonian relative to $M$ (Eq.~\eqref{eq:HamiltonianM}). Thus, we assume that the dynamical evolution in \cite{Mekonnen_2025} must have been computed using the Hamiltonian
\begin{align}
   H^{(M)}_{SR}=\frac{\hat{\pi}_R^2}{2m_R}+\frac{\hat{\pi}_S^2}{2m_S}+m_R V(\hat{q}_R)+m_SV(\hat{q}_S),
\end{align}
which is \emph{not} invariant under $\hat{S}^{P}_{M\to R}$ \emph{for any value of P, including $P=0$}, due to the presence of the term proportional to $\hat{\pi}_R$. As a consequence, $\hat{S}^{P}_{M\to R}$ does not constitute a symmetry in this case and the strategy of computing the dynamical evolution in a more tractable QRF is not applicable -- independently of the charge sector.\footnote{Since the authors of \cite{Mekonnen_2025} do not explicitly state the Hamiltonian generating the time evolution (employing a semi-classical approximation to compute the trajectory of the probe $S$ in each branch instead), let us note that this argument applies to essentially all Hamiltonians $\hat{H}_{RS}^{(M)}$ which depend on $\hat{\pi}_R$ -- it would require extreme fine-tuning to obtain a Hamiltonian depending on $\hat{\pi}_R$ that is invariant under $\exp(i\hat{q}_R\hat{p}_S)$ while it fails to be invariant under $\exp(i\hat{q}_R P)$.}

\subsection{Further Generalisation of the QRF Transformation}

One may ask whether the extended QRF transformation we considered here could be generalised even further, as suggested in \cite{Castro-Ruiz_2025}. A first natural modification of the transformation $\hat{S}_{A\to C}^P = \hat{\mathcal{P}}_{AC}e^{i\hat{x}_C(\hat{p}_B-P)}$ would be to replace the term $\hat{x}_CP$ in the exponent by an arbitrary function $f(\hat{x}_C)$ of the position operator, resulting in a QRF transformation 
\begin{equation} \label{eq:genQRFoperator}
\hat{S}^f_{AC}= \hat{\mathcal{P}}_{AC}e^{i\hat{x}_C\hat{p}_B}e^{if(\hat{x}_C)}
\end{equation}
with a more general, branch-dependent phase. Note that these maps preserve the transformation properties of the position variables in Eq.~\eqref{eq:1a}--\eqref{eq:1b} and can thus be understood as a form of quantum translation, albeit with modified transformation properties of the momentum $\hat{p}_C$. Let us briefly show that the transformations considered in this paper are already of the most general form compatible with the basic properties of frame changes. One necessary condition we require the most generic QRF transformation operator to satisfy is transitivity, that is, 
\begin{align}
    \hat{S}^f_{2\to 3}\hat{S}^f_{1\to 2}=\hat{S}^f_{1\to 3}.
\end{align}
We find that 
\begin{align}
    \hat{S}^f_{2\to 3}\hat{S}^f_{1\to 2} \int dx dy \psi(x,y) \ket{x}_2\ket{y}_3 &= \hat{S}^f_{2\to 3} \hat{\mathcal{P}}_{12} \int dx dy \psi(x,y) \ket{x}_2\ket{y-x}_3 e^{if(x)} \\
    &= \hat{S}^f_{2\to 3} \int dxdy \psi(x,y) \ket{-x}_1\ket{y-x}_3 e^{if(x)} \\
    &= \hat{\mathcal{P}}_{23} \int dxdy \psi(x,y) \ket{-x-(y-x)}_1\ket{y-x}_3 e^{if(y-x)}e^{if(x)} \\
    &= \int dxdy \psi(x,y) \ket{-y}_1\ket{x-y}_3 e^{if(y-x)}e^{if(x)},
\end{align}
which equals $\hat{S}^f_{1\to 3} \int dx dy \psi(x,y) \ket{x}_2\ket{y}_3$ if and only if 
\begin{align}
    e^{if(y-x)}e^{if(x)}=e^{if(y)}.
\end{align}
Moreover, imposing unitarity, that is, $(\hat{S}^f_{1\to 2})^\dagger = \hat{S}^f_{2\to 1}$, implies that $f$ is an odd function. As a consequence, $f(0)=0$ and we have shown that $f$ must be a linear function.\\

In general, the function $f$ in Eq.~\eqref{eq:genQRFoperator} could depend on all phase space variables of systems $B$ and $C$ (when transforming from system $A$), that is, $f(\hat{x}_C, \hat{p}_C, \hat{x}_B, \hat{p}_B)$. We provide additional arguments in Appendix \ref{app:mostgeneralQRFtrafo} for why this cannot be the case. We thus conclude that the form $\hat{\mathcal{P}}_{AC}e^{i\hat{x}_C(\hat{p}_B-c)}$ for $c\in \mathbb{R}$ is the most generic form that a transitive and unitary QRF change operator can take. \\

\subsection{Superposition of Charge Sectors} \label{sec:SuperpositionChargeSectors}

Another natural question that arises when extending QRF transformations to arbitrary charge sectors is whether we can also consider a \emph{superposition of charge sectors}. The first challenge when trying to address this question is the presence of superselection rules \cite{Kitaev_2004, Bartlett_2007, Palmer_2014}. In the presence of a superselection rule associated with strict translation invariance, it is not possible to consider a superposition of total momenta, since such a superposition cannot be distinguished from a mixture. This is because verifying coherence between different momentum sectors would require measurements in the conjugate basis, i.e.~of the centre of mass position operator, which are observables inaccessible to internal subsystems under translation invariance. Yet, we know that it is, in principle, possible to prepare and measure systems that move in a superposition of different total momenta. This is because \emph{relative} to an external reference frame, such as the laboratory, one can make sense of the \enquote{absolute} positions of the systems. Thus, the superselection rule doesn't apply \emph{at the level of the laboratory}.

So, given that it makes sense to consider a superposition of charge sectors relative to an external frame, is there a way of extending our construction to such scenarios? At the end of Section \ref{sec:extensionPN}, we saw that we can derive the QRF transformation for non-zero total momentum $P$ from the one for zero charge sector by including a fourth system $D$ that moves at momentum $-P$ relative to Eve. One might hope to extend this strategy by preparing $D$ in a superposition of momenta. To understand whether this may work, let us consider $D$ in a superposition of momenta $-P_1$ and $-P_2$, such that the kinematical state is given by 
\begin{equation}
    \ket{\psi_{\kin}}^{(E)}_{ABCD} = \frac{1}{\sqrt{2}}(\ket{x_1}_A+\ket{x_2}_A)\ket{y}_B\ket{z}_C\frac{1}{\sqrt{2}}\left(\ket{-P_1}_D+\ket{-P_2}_D\right).
\end{equation}
Applying the coherent $G$-twirl on the four-party system and going into the perspective of $A$ (with $X=0$) gives
\begin{align}
    \ket{\psi}_{BCD}^{(A)}=\frac{1}{\sqrt{2}}&\left(\ket{\psi^{P_1}}^{(A)}_{BC}\ket{-P_1}_D+\ket{\psi^{P_2}}^{(A)}_{BC}\ket{-P_2}_D\right)\\
    =\frac{1}{\sqrt{2}}&(e^{-ix_1P_1}\ket{y-x_1}_B\ket{z-x_1}_C+e^{-ix_2P_1}\ket{y-x_2}_B\ket{z-x_2}_C)\ket{-P_1}_D\\
  + &(e^{-ix_1P_2}\ket{y-x_1}_B\ket{z-x_1}_C+e^{-ix_2P_2}\ket{y-x_2}_B\ket{z-x_2}_C)\ket{-P_2}_D.
\end{align}
This, however, leads to an entangled state, which, upon tracing out the additional system $D$, leads to a mixture of different charge sectors (cf.~\cite{Bartlett_2007,Krumm_2021}) with respect to Alice, in line with the superselection rule. Thus, while one can in principle prepare a superposition of different charge sectors relative to an external frame (e.g.~Eve), one (e.g.~Alice) \emph{cannot} distinguish it from a mixture \emph{from within}, i.e.~without access to an external reference frame or the additional system $D$. Of course, Alice is free to prepare and measure systems in momentum superpositions relative to her own reference frame as the perspectival approach places no restrictions on such operations. 

Note that this strategy of adding a subsystem $D$ differs from the one employed in the extra-particle approach, which consists of encoding the total momentum information in $\H_p\simeq \mathbb{C}$ \cite{Castro_Ruiz_2021}. Unlike system $D$, this Hilbert space does not capture both position and momentum variables but only encodes the value of the total momentum, i.e.~the charge sector.\footnote{Mathematically, this implies that we no longer work with a symplectic structure, as usual for the phase space, but a symplectic \emph{foliation}, which makes standard quantisation difficult (see e.g.~\cite{Riello_2021}).}\footnote{Let us also briefly note that a superposition of charge sectors is also briefly discussed in the context of gauge theories in \cite{Araujoregado_2025} -- although in this context, one needs to distinguish between the subregion of interest, whose boundary carries zero charge, and the global region, in which the subregion is embedded and which can be considered to carry non-trivial charges at its boundary.} \\

So, to summarise, one can, in the presence of translation invariance, always prepare and measure superpositions of momenta with respect to one’s own reference frame, but one cannot verify that oneself is in a superposition of charge sectors. In the present case, we {\it define} that Eve possesses an external reference frame with respect to which she can prepare superpositions of momenta. However, if, at the same time, Eve were part of a \emph{larger} translationally invariant system as seen from yet another external reference, she would again not be able to find out whether she is in a superposition of momenta with respect to this reference frame. In that sense, her role would be comparable to that of Alice, just on a higher level.\footnote{It would be interesting to investigate to what extent this view stands in contrast to findings in \cite{delsanto2025}.}

\section{Conclusion} \label{sec:conclusion}

In this work, we provided an extension of the perspectival and perspective-neutral approach to different charge sectors, considering the possibility that the entire system, including the reference frames, moves at a fixed total momentum $P$ (relative to an external reference frame) (Section \ref{sec:extension}). In the perspectival approach, this is achieved by altering the transformation of the canonical variables. In the perspective-neutral approach, we modified the constraint, which is imposed on the kinematical states to obtain the physical Hilbert space. The resulting QRF transformations reduce to the standard transformations upon setting $P=0$ and otherwise differ from them by a phase, which depends on both the total momentum and the position of the new reference frame relative to the old. While this phase is global and thus unobservable in the case of classical reference frames with well-defined positions relative to one another, it becomes a relative phase and thus in principle observable when working with quantum reference frames. To understand precisely under which conditions two internal observers can measure this phase and infer the total momentum, we undertook an accessibility analysis in Section \ref{sec:accessibility}. Gradually increasing the observables and resources available to them, we studied three different levels of accessibility, inspired by different approaches to QRFs. At the first level, Alice and Charlie only have access to \emph{relative framed observables} as used in the operational approach \cite{Carette_2023} to QRFs. In this case, they cannot determine the phase at all. At the second level, they can measure all \emph{relational observables} of the perspective-neutral approach, which correspond to \emph{all observables in a given perspective} of the perspectival approach. While this allows them to perform full state tomography and thus determine the phase, it is only at the third level, which also allows for classical communication, that they can combine their results to infer the total momentum. Finally, going beyond these three levels, we discussed how the extra-particle approach allows direct access to the total momentum and might thus give access to additional quantities encoded in the entangled state of $A$ and $C$ relative to Eve. 

Beyond the accessibility analysis, our theoretical framework contributes several new insights to the field of QRFs. Our invariant observables Theorem \ref{th: invariance} characterises precisely which operators remain invariant under the extended transformations, revealing that the additional momentum-dependent phase imposes stricter invariance conditions than the standard zero charge case. More fundamentally, we demonstrated that the extended QRF transformation presented here represents the most general form of a quantum-controlled translation compatible with transitivity and unitarity, establishing it as a natural completion of the standard framework rather than an ad hoc extension. Additionally, our analysis in Section \ref{sec:CovariancePhysicalLaws} confirms that the principle of covariance of physical laws -- an essential feature of the perspectival approach -- continues to hold in the extended framework, provided the underlying dynamics commute appropriately with the reference frame positions.

Our results can be embedded in a larger research programme, trying to understand the relation between different approaches to QRFs \cite{Krumm_2021, Hoehn_2021, Carette_2023, Castro_Ruiz_2021, delaHamette_2021_perspectiveneutral,Hoehn_2023_subsystems, devuyst2025relation,Castro-Ruiz_2025}. In line with previous arguments, our analysis highlights that what is deemed accessible in a given perspective strongly depends on the chosen framework. As such, the question whether the total momentum can be accessed \emph{from within} has no universal answer. Different approaches to QRFs will lead to different conclusions. This also suggests that there is no one right approach to modelling QRFs but rather that different setups require different theoretical frameworks.\\

While our analysis is based on states that are, by construction, compatible with a global perspective, one may ask whether there are states that do not give rise to a global perspective at all. In this respect, the uniqueness results regarding the extended QRF transformation may prove to be helpful. We saw that the QRF transformation presented in this work is the most general position reference frame transformation compatible with transitivity and unitarity. Conversely, if we start from two states that are not related by such a transformation, it might be impossible to find a reasonable change of reference frame that treats these states as different perspectives in the QRF sense. In certain cases, one might be able to cast them as relative to \emph{non-ideal} reference frames, though we leave the extension of QRF transformations for non-ideal frames to non-zero charge sectors for future work. However, we would like to highlight that a particularly interesting case would be non-ideal QRFs with a restricted momentum spectrum -- here, one might want to investigate how the relation between the different spectra changes once the total momentum is altered. Another interesting further direction would be to extend the present framework to field-theories, such as $U(1)$ gauge theory. The extension to non-abelian theories presents a bigger challenge since there the zero charge sector is, in a sense, preferred as it is the only sector in which the states can be considered to be invariant.

Finally, let us return once more to the question of whether two arbitrary quantum states can be embedded in a global perspective. Even if one takes into account non-ideal QRFs or generalises the construction to other symmetry groups, some states might not be relatable by \emph{any} reasonable QRF transformation. This, combined with the limitations in reconstructing the global perspective from the local ones, that was explicated in our accessibility analysis, points towards an asymmetry in the relationship between local and global descriptions. While global perspectives always generate consistent local descriptions through appropriate reduction maps, the converse -- inferring global properties from local perspectives -- proves far more subtle and framework-dependent. This asymmetry also shows up beyond the context of QRF transformations, in particular in recent thought experiments, such as the Wigner's friend scenario and its extensions (e.g~\cite{Wigner_1962,Brukner_2017,Frauchiger_2018, Bong_2019}). Such scenarios increasingly suggest that the very notion of a universal, objective perspective may itself be called into question. Understanding what can be reconstructed from within a perspective and what may remain ultimately inaccessible thus becomes increasingly important as we push the boundaries of quantum theory beyond the well-established regimes.\\

\textit{Note:} In the final stages of this paper, we became aware of independent work by Guilhem Doat and Augustin Vanrietvelde, who identify charge accessibility as the source of disagreement between quantum reference frames frameworks and operationally motivate this accessibility. This work will appear on the arXiv shortly.

\begin{acknowledgments}
All authors thank Esteban Castro and Bruna Sahdo for helpful discussions regarding the extra-particle approach. AC thanks Andrea Di Biagio for insightful discussions. VK thanks the QIT group at ETH, in particular Ladina Hausmann and Sébastien Garmier for helpful discussions and feedback. The authors also acknowledge the support of ChatGPT 5 in the development of the proofs of Theorem 1 and Proposition 1, the conception of the counterexample, as well as the proof in Appendix E.

ACdlH acknowledges support from the Institute for Theoretical Studies at ETH Zurich through a Junior Research Fellowship. VK thanks the Swiss National Science Foundation via the NCCR SwissMAP and the ETH Zurich Quantum Center for support. 
This research was funded in whole or in part by the Austrian Science Fund (FWF) [10.55776/F71] and [10.55776/COE1]. For open access purposes, the authors have applied a CC BY public copyright license to any author accepted manuscript version arising from this submission. Funded by the European Union - NextGenerationEU. 
This publication was made possible through the financial support of the ID 62312 grant from the John Templeton Foundation, as part of The Quantum Information Structure of Spacetime (QISS) Project (qiss.fr). The opinions expressed in this publication are those of the authors and do not necessarily reflect the views of the John Templeton Foundation.
\end{acknowledgments}

\nocite{apsrev42Control}
\bibliography{bibliography}
\bibliographystyle{apsrev4-2.bst}

\appendix
\section{Unitarity and Transitivity of the Extended QRF Transformation} \label{app:genQRFtrafosPersp}

Let us start by checking that the generalised QRF change operator in the perspectival approach is transitive: $\hat{S}_{B\to C}^P\hat{S}_{A\to B}^P = \hat{S}_{A\to C}^P$. We show this on a general state of $B$ and $C$ relative to $A$:
\begin{align}
    \hat{S}_{B\to C}^P\hat{S}_{A\to B}^P \ket{\psi^P}_{BC}^{(A)} 
    &= \hat{S}_{B\to C}^P  \hat{\mathcal{P}}_{AB} e^{i\hat{x}_B(\hat{p}_C-P)}\int dy dz\ \psi(y,z)\ket{y}_B\ket{z}_C\\
    &= \hat{S}_{B\to C}^P \hat{\mathcal{P}}_{AB}\int dy dz \ \psi(y,z) e^{-iPy}  \ket{y}_B\ket{z-y}_C \\ 
    &= \hat{S}_{B\to C}^P  \int dy dz\ \psi(y,z) e^{-iPy} \ket{-y}_A \ket{z-y}_C\\
    &=  \hat{\mathcal{P}}_{BC}\int dy dz\  \psi(y,z) e^{i(-y-(z-y))P} \ket{-y-(z-y)}_A \ket{z-y}_C \\
    &= \int dy dz\  \psi(y,z)  e^{-izP}\ket{-z}_A\ket{y-z}_B \\
    &= \hat{S}_{A\to C}^P\ket{\psi^P}_{BC}^{(A)}. 
\end{align}

Moreover, we can show that the generalised QRF change operator is unitary: $(\hat{S}_{A\to B}^P)^\dagger=\hat{S}_{B\to A}^P$.
\begin{align}
    (\hat{S}_{A\to B}^P)^\dagger = (\hat{\mathcal{P}}_{AB} e^{i\hat{x}_B(\hat{p}_C-P)})^\dagger = e^{-i\hat{x}_B(\hat{p}_C-P)}\hat{\mathcal{P}}_{AB}^\dagger = \hat{\mathcal{P}}_{AB}^\dagger e^{i\hat{x}_A(\hat{p}_C-P)} = \hat{\mathcal{P}}_{BA}e^{i\hat{x}_A(\hat{p}_C-P)} = \hat{S}_{B\to A}^P.
\end{align}

\section{Extended QRF Transformation in the Perspective-Neutral Approach} \label{app:genQRFtrafosPN}

In this Appendix we show that the generalised QRF transformation operator in the perspective-neutral approach, provided in Eq.~\eqref{eq:genQRFtrafoPN} indeed provides the right mapping between relative physical states. For convenience, let us repeat that the transformation operator from $A$ at position $X$ to $C$ at position $Z$ in charge sector $P$ is given by
\begin{align}
    \hat{S}^P_{A\to C}(X,Z)=\int dz \ket{X+Z-z}_A\bra{z}_C\otimes \hat{U}_B^P(Z-z))=\int dz \ket{X+Z-z}_A\bra{z}_C\otimes e^{i(z-Z)(\hat{p}_B-P)}.\label{eq:genQRFchangeXZ}
\end{align}
Let us now show that this operator indeed maps a general relative physical state relative to $A$ of the form 
\begin{align}
    \ket{\psi^P}_{BC}^{(A)}= \int dx dy dz\ e^{-iP(x-X)}\psi(x,y,z) \ket{y-(x-X)}_B \ket{z-(x-X)}_C\label{eq:GenRelStateA}
\end{align}
to the corresponding state relative to $C$
\begin{align}
    \ket{\psi^P}_{AB}^{(C)}= \int dx dy dz\ e^{-iP(z-Z)}\psi(x,y,z) \ket{x-(z-Z)}_A \ket{y-(z-Z)}_B.\label{eq:genRelStateC}
\end{align}
(these are the generalisations of Eq.~\eqref{eq:psiRelAX} and Eq.~\eqref{eq:psiRelCZ}). We can calculate this straightforwardly:
\begin{align}
   &\hat{S}_{A\to C}^P(X,Z) \ket{\psi^P}_{BC}^{(A)}\\ &=  \int dz' \ket{X+Z-z'}_A\bra{z'}_C\otimes e^{i(z'-Z)(\hat{p}_B-P)} \int dx dy dz \ e^{-iP(x-X)}\psi(x,y,z) \ket{y-(x-X)}_B \ket{z-(x-X)}_C \\
   &= \int dx dy dz\ e^{i(z-Z-(x-X))(\hat{p}_B-P)}e^{-iP(x-X)}\psi(x,y,z) \ket{X+Z-(z-(x-X))}_A \ket{y-(x-X)}_B \\ 
   &= \int dx dy dz\ e^{-iP(x-X)}e^{-iP(z-Z-(x-X))}\psi(x,y,z) \ket{x-(z-Z)}_A \ket{y-(x-X)-(z-Z)+(x-X)}_B \\
   &= \int dx dy dz\ e^{-iP(z-Z)}\psi(x,y,z) \ket{x-(z-Z)}_A \ket{y-(z-Z)}_B \\
   &= \ket{\psi^P}_{AB}^{(C)}.
\end{align}
It is straightforward to check that the operator $\hat{S}_{A\to C}^P(X,Z)$ is unitary and transitive as well.

Moreover, if we set $X=Y=0$, we recover the generalised QRF transformation operator from the perspectival approach:
\begin{align}
    \hat{S}_{A\to C}^P(0,0) =\int dz \ket{X+Z-z}_A\bra{z}_C\otimes \hat{U}_B^P(Z-z))=\int dz \ket{-z}_A\bra{z}_C\otimes e^{iz(\hat{p}_B-P)}=\hat{\mathcal{P}}_{AC}e^{i\hat{x}_C(\hat{p}_B-P)},
\end{align}
thus showing consistency between the two approaches when the reference frames are considered to be at the origin relative to themselves.

\section{Invariance of Observables under Extended QRF Transformation}\label{app: invariantOperators}

\setcounter{theorem}{0}
\begin{theorem}
Given a Hermitian operator with Schmidt decomposition $\hat{O}_{BC}^{(A)}=\sum_{k=1}^{n}\lambda_k \hat{O}^{(k)}_B\otimes \hat{O}^{(k)}_C\in \mathcal{L}(\mathcal{H}_{B}^{(A)}\otimes \mathcal{H}_{C}^{(A)})$, which is invariant under the QRF transformation from $A$ to $C$ for zero charge sector, i.e.~$\hat{S}_{A\to C} \hat{O}_{BC}^{(A)} \hat{S}^\dagger_{A\to C} =\mathcal{P}_{AC} \hat{O}_{BC}^{(A)}\mathcal{P}_{AC}^\dagger$, it is invariant under the generalised QRF transformation $\hat{S}^P_{A\to C}$ from $A$ to $C$ if and only if $[\hat{x}_C,\hat{O}_C^{(k)}]=0$ for all $k\in\{1,\dots,n\}$.
\end{theorem}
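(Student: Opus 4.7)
The plan is to reduce the extended invariance condition to a condition purely about the $C$-factors, using the algebraic fact that the generalised transformation differs from the standard one only by a controlled phase. Since $\hat{x}_C$ commutes with $\hat{p}_B$, I can factorise
\begin{equation*}
\hat{S}^{P}_{A\to C} \;=\; \hat{\mathcal{P}}_{AC}\,e^{i\hat{x}_C(\hat{p}_B-P)} \;=\; \hat{\mathcal{P}}_{AC}\,e^{i\hat{x}_C\hat{p}_B}\,e^{-i\hat{x}_C P} \;=\; \hat{S}_{A\to C}\,e^{-i\hat{x}_C P},
\end{equation*}
so that $(\hat{S}^{P}_{A\to C})^{\dagger}=e^{i\hat{x}_C P}\hat{S}_{A\to C}^{\dagger}$. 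Plugging this into the generalised invariance condition $\hat{S}^{P}_{A\to C}\hat{O}_{BC}^{(A)}(\hat{S}^{P}_{A\to C})^{\dagger}=\hat{\mathcal{P}}_{AC}\hat{O}_{BC}^{(A)}\hat{\mathcal{P}}_{AC}^{\dagger}$ and then using the assumed standard invariance $\hat{S}_{A\to C}\hat{O}_{BC}^{(A)}\hat{S}_{A\to C}^{\dagger}=\hat{\mathcal{P}}_{AC}\hat{O}_{BC}^{(A)}\hat{\mathcal{P}}_{AC}^{\dagger}$ together with unitarity of $\hat{S}_{A\to C}$, the whole statement collapses to the equivalent condition
\begin{equation*}
e^{-i\hat{x}_C P}\,\hat{O}_{BC}^{(A)}\,e^{i\hat{x}_C P} \;=\; \hat{O}_{BC}^{(A)}.
\end{equation*}

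Next I would substitute the Schmidt decomposition $\hat{O}_{BC}^{(A)}=\sum_k \lambda_k\,\hat{O}_B^{(k)}\otimes\hat{O}_C^{(k)}$. Since $e^{-i\hat{x}_C P}$ acts trivially on the $B$-factor, the condition becomes
\begin{equation*}
\sum_{k=1}^n \lambda_k\,\hat{O}_B^{(k)}\otimes\bigl(e^{-i\hat{x}_C P}\hat{O}_C^{(k)}e^{i\hat{x}_C P}\bigr) \;=\; \sum_{k=1}^n \lambda_k\,\hat{O}_B^{(k)}\otimes\hat{O}_C^{(k)}.
\end{equation*}
The defining property of a Schmidt decomposition is that the $\hat{O}_B^{(k)}$ are linearly independent (in fact orthogonal in Hilbert--Schmidt inner product), and likewise for the $\hat{O}_C^{(k)}$, with $\lambda_k\neq 0$. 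Invoking linear independence of the $\hat{O}_B^{(k)}$ in this operator-valued equation, I can match terms and conclude
\begin{equation*}
e^{-i\hat{x}_C P}\,\hat{O}_C^{(k)}\,e^{i\hat{x}_C P} \;=\; \hat{O}_C^{(k)} \qquad \forall\,k\in\{1,\dots,n\}.
\end{equation*}
The sufficient (``if'') direction is the mirror image: if every $\hat{O}_C^{(k)}$ commutes with $\hat{x}_C$, then it automatically commutes with $e^{-i\hat{x}_C P}$, so the above tensor identity holds and running the first paragraph in reverse gives the generalised invariance.

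The main obstacle is the last step of the ``only if'' direction: passing from $[e^{-i\hat{x}_C P},\hat{O}_C^{(k)}]=0$ to $[\hat{x}_C,\hat{O}_C^{(k)}]=0$. For a single fixed $P$ this implication is genuinely false (e.g.\ $\hat{O}_C^{(k)}$ could be $P$-periodic in $\hat{p}_C$), so the theorem should be read with invariance required either for all $P\in\mathbb{R}$ or, equivalently, on an open interval containing $0$; then differentiating the one-parameter family of identities at $P=0$ (on a suitable dense domain, e.g.\ Schwartz vectors in $\mathcal{H}_C$ if $\hat{O}_C^{(k)}$ is at most polynomially bounded in $\hat{p}_C$) yields the commutator condition. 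I would address this by stating the theorem with a one-parameter family of transformations and invoking Stone's theorem to convert strong commutation with the unitary group $\{e^{-i\hat{x}_C P}\}_{P\in\mathbb{R}}$ into commutation with its generator $\hat{x}_C$, which is the cleanest way to sidestep any ambiguity about the single-$P$ reading.
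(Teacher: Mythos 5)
Your argument is essentially the paper's own: factor the generalised transformation as $\hat{S}_{A\to C}e^{-i\hat{x}_CP}$, use the assumed standard invariance to reduce everything to the conjugation condition $e^{-i\hat{x}_CP}\hat{O}_{BC}^{(A)}e^{i\hat{x}_CP}=\hat{O}_{BC}^{(A)}$, and then exploit the Hilbert--Schmidt orthogonality (hence linear independence) of the Schmidt factors $\hat{O}_B^{(k)}$ to decouple the condition into one per $C$-factor. Up to that point your proof and the paper's coincide step for step; the paper simply phrases the reduction at the level of commutators, invoking the equivalence $e^{iX}Ye^{-iX}=Y\Leftrightarrow[X,Y]=0$ and writing the decoupled condition directly as $\sum_k\hat{O}_B^{(k)}\otimes[\hat{x}_C,\hat{O}_C^{(k)}]=0$.

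The caveat you raise in your last paragraph is a genuine observation, and it is one the paper's proof does not address: the biconditional $e^{iX}Ye^{-iX}=Y\Leftrightarrow[X,Y]=0$ is stated there without qualification, but the forward implication fails for a single fixed unitary -- exactly as you note, an $\hat{O}_C^{(k)}$ that is a $P$-periodic function of $\hat{p}_C$ commutes with $e^{-i\hat{x}_CP}$ without commuting with $\hat{x}_C$. The paper implicitly needs either the reading in which invariance is demanded for all $P$ (or a neighbourhood of $P=0$), so that one can differentiate the one-parameter family and apply Stone's theorem as you propose, or else a separate argument that the premise of standard invariance already rules out such periodic $C$-factors (which is plausible for the operators satisfying the anticommutation structure identified in the paper's Proposition 1, but is not argued anywhere). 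Your version, which makes the quantification over $P$ explicit and handles the domain issues for the unbounded generator, is the more careful rendering of the same proof.
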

\begin{proof}
We assume that an operator $\hat{O}_{BC}^{(A)}$ is invariant under the standard QRF transformation if
\begin{equation}
    \hat{S}_{A\to C} \hat{O}_{BC}^{(A)} \hat{S}^\dagger_{A\to C} = \mathcal{P}_{AC} e^{i\hat{x}_C\hat{p}_B}\hat{O}_{BC}^{(A)}e^{-i\hat{x}_C\hat{p}_B}\mathcal{P}_{AC}^\dagger =\mathcal{P}_{AC} \hat{O}_{BC}^{(A)}\mathcal{P}_{AC}^\dagger,\label{eq: invariance}
\end{equation}
i.e.~it remains the same up to a parity swap of $A$ and $C$. Note that for self-adjoint operators, $e^{iX}Ye^{-iX} = Y \Leftrightarrow [X,Y]=0$.\footnote{For unbounded operators like $\hat{p}_B$ and $\hat{x}_C$, all operator equations are understood on appropriate dense domains where the operators and their products are well-defined.} Thus, the invariance condition in Eq.~\eqref{eq: invariance} is equivalent to $[\hat{p}_B\otimes\hat{x}_C,\hat{O}^{(A)}_{BC}]=0$.\\

Similarly, the condition for the operator to be invariant under the extended QRF transformation reads
\begin{equation}
    \hat{S}_{A\to C} \hat{O}_{BC}^{(A)} \hat{S}^\dagger_{A\to C} = \mathcal{P}_{AC} e^{i\hat{x}_C\hat{p}_B}e^{-i\hat{x}_CP}\hat{O}_{BC}^{(A)}e^{i\hat{x}_CP}e^{-i\hat{x}_C\hat{p}_B}\mathcal{P}_{AC}^\dagger =\mathcal{P}_{AC} \hat{O}_{BC}^{(A)}\mathcal{P}_{AC}^\dagger.
\end{equation}
If the operator is already invariant under the standard QRF transformation, this is true if and only if additionally
\begin{equation}
    e^{-i\hat{x}_CP}\hat{O}_{BC}^{(A)}e^{i\hat{x}_CP}=\hat{O}_{BC}^{(A)}.
\end{equation}
Again, this is equivalent to the condition $[\mathbb{I}_B\otimes\hat{x}_C,\hat{O}_{BC}^{(A)}]=0$. Using the orthogonality of the Schmidt decomposition, this imposes a condition on each $\hat{O}_C^{(k)}$ separately. To see this, note first that we can rewrite the commutation condition as
\begin{equation}
    [\mathbb{I}_B\otimes \hat{x}_C,\hat{O}_{BC}^{(A)}]= \sum_{k=1}^n [\mathbb{I}_B\otimes \hat{x}_C,O_B^{(k)}\otimes O_C^{(k)}]= \sum_{k=1}^n\hat{O}_B^{(k)}\otimes [\hat{x}_C,\hat{O}_C^{(k)}]=0.
\end{equation}
Now, in the Schmidt decomposition, the operators in each term must be orthonormal, that is $\mathrm{Tr}\left((\hat{O}_X^{(k)})^\dagger \hat{O}_X^{(j)}\right)=\delta_{kj}$ for $X\in \{B,C\}$ and $j,k\in \{1,\dots,n\}$ \cite{nielsen2000quantuminformationtheory, Tyson_2003}. This implies that all terms in the sum are linearly independent and thus the condition is satisfied if and only if $[\hat{x}_C,\hat{O}_C^{(k)}]=0$ for all $k\in{1,\dots,n}$.
\end{proof}

In order to gain some intuition for this condition, it is helpful to study more closely the conditions under which an operator is invariant under the standard QRF transformation but not under the extended QRF transformation for non-zero charge sector. To do so, we consider as an example an observable for which the Schmidt rank $n=1$, such that it can be written as $\hat{O}_{BC}^{(A)}=\hat{O}_B\otimes \hat{O}_C$. In this case the following proposition holds.
\setcounter{proposition}{0}
\begin{proposition}
Given a non-zero Hermitian operator of the form $\hat{O}_{BC}^{(A)}=\hat{O}_B\otimes \hat{O}_C\in \mathcal{H}_{BC}^{(A)}$ that is invariant under the QRF transformation from $A$ to $C$ for zero charge sector, this invariance does \emph{not} extend to the transformation for non-zero charge sector if and only if $\{\hat{p}_B,\hat{O}_B\}=0$ and $\{\hat{x}_C,\hat{O}_C\}=0$.
\end{proposition}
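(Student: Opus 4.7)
\medskip

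\textbf{Proof plan.} My plan is to reduce the statement to the commutator conditions established in Theorem~\ref{th: invariance}, and then to extract the anticommutation structure from a tensor-product uniqueness argument. Concretely, Theorem~\ref{th: invariance} (in its $n=1$ specialisation) tells us that $\hat{O}_B\otimes\hat{O}_C$ is invariant under $\hat S_{A\to C}$ iff $[\hat p_B\otimes\hat x_C,\hat O_B\otimes\hat O_C]=0$, and that the extended invariance additionally requires $[\hat x_C,\hat O_C]=0$ (since $\hat O_B\neq 0$). So my task is to show: given standard invariance, one has $[\hat x_C,\hat O_C]\neq 0$ precisely when $\{\hat p_B,\hat O_B\}=0$ and $\{\hat x_C,\hat O_C\}=0$.

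\medskip

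\textbf{Forward direction.} I would rewrite the standard invariance condition as the tensor-product identity
\begin{equation}
\hat p_B\hat O_B\otimes\hat x_C\hat O_C \;=\; \hat O_B\hat p_B\otimes\hat O_C\hat x_C.
\end{equation}
Since $\hat O_B$ and $\hat O_C$ are non-zero and $\hat p_B$, $\hat x_C$ have trivial kernels, all four one-sided products are non-zero. Using the standard fact that equality of nonzero simple tensors $\hat A\otimes\hat B=\hat A'\otimes\hat B'$ forces $\hat A'=c\hat A$ and $\hat B'=c^{-1}\hat B$ for some scalar $c\in\mathbb{C}^\times$, I obtain
\begin{equation}
\hat O_B\hat p_B = c\,\hat p_B\hat O_B,\qquad \hat O_C\hat x_C = c^{-1}\hat x_C\hat O_C.
\end{equation}

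\medskip

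\textbf{Pinning down $c$.} This is the step I expect to be the main obstacle, and the key technical input. Hermiticity of $\hat O_B$, $\hat O_C$, $\hat p_B$ and $\hat x_C$ will constrain $c$ to a real sign. The cleanest way is to take adjoints of the two relations above: from the first, $\hat p_B\hat O_B=\bar c\,\hat O_B\hat p_B=\bar c\,c\,\hat p_B\hat O_B$, forcing $|c|=1$. To upgrade $|c|=1$ to $c=\pm 1$, I would work in the momentum basis: the relation $\hat O_B\hat p_B=c\,\hat p_B\hat O_B$ implies the kernel $\langle p|\hat O_B|p'\rangle$ is supported on the locus $p'=c\,p$, and Hermiticity $\langle p|\hat O_B|p'\rangle^*=\langle p'|\hat O_B|p\rangle$ requires the supports $\{p'=cp\}$ and $\{p=cp'\}$ to coincide, yielding $c^2=1$. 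The alternative $c=1$ gives $[\hat x_C,\hat O_C]=0$ and extended invariance would hold, contradicting our hypothesis; hence $c=-1$, which is exactly $\{\hat p_B,\hat O_B\}=0$ and $\{\hat x_C,\hat O_C\}=0$.

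\medskip

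\textbf{Reverse direction.} This is a short direct check. Assuming the two anticommutators vanish, I would substitute $\hat p_B\hat O_B=-\hat O_B\hat p_B$ and $\hat x_C\hat O_C=-\hat O_C\hat x_C$ into the commutator $[\hat p_B\otimes\hat x_C,\hat O_B\otimes\hat O_C]$; the two minus signs multiply to $+1$ and the commutator vanishes, giving standard invariance. On the other hand, $\{\hat x_C,\hat O_C\}=0$ gives $[\hat x_C,\hat O_C]=2\hat x_C\hat O_C\neq 0$ (as $\hat O_C\neq 0$ and $\hat x_C$ has trivial kernel), so by Theorem~\ref{th: invariance} the generalised invariance fails, as required. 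The only subtlety I anticipate in writing this up cleanly is the usual domain issue for the unbounded operators $\hat p_B$ and $\hat x_C$, which I would handle by interpreting all identities on a common invariant dense domain (e.g.\ Schwartz space), consistent with the footnote in the proof of Theorem~\ref{th: invariance}.
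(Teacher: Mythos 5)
Your proof is correct and arrives at the same dichotomy as the paper, but by a genuinely different mechanism. The paper expands the invariance condition via the identity
\begin{equation*}
[\hat p_B\otimes\hat x_C,\hat O_B\otimes\hat O_C]=\tfrac12\left([\hat p_B,\hat O_B]\otimes\{\hat x_C,\hat O_C\}+\{\hat p_B,\hat O_B\}\otimes[\hat x_C,\hat O_C]\right)
\end{equation*}
and runs a case analysis on which tensor factors can vanish, excluding the mixed case by noting that a skew-Hermitian commutator cannot be proportional to a Hermitian anticommutator. You instead recast standard invariance as an equality of nonzero simple tensors, $\hat p_B\hat O_B\otimes\hat x_C\hat O_C=\hat O_B\hat p_B\otimes\hat O_C\hat x_C$, use the scalar ambiguity of simple-tensor factorisations to obtain $\hat O_B\hat p_B=c\,\hat p_B\hat O_B$ and $\hat O_C\hat x_C=c^{-1}\hat x_C\hat O_C$, and then pin down $c=\pm1$ by adjoints together with the momentum-space support of the kernel of $\hat O_B$. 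The two routes buy different things: the paper's is shorter and stays entirely at the level of (anti)commutator algebra, while yours makes the excluded intermediate case fully explicit. In fact, your kernel-support step supplies a justification the paper's argument glosses over -- the paper asserts the proportionality constant is real ($\alpha\in\mathbb{R}$), whereas Hermiticity bookkeeping alone only forces it to be real or purely imaginary, and the imaginary possibility is ruled out precisely by the spectral-support reasoning you give (the support locus $p'=cp$ with $p,p'$ real forces $c$ real). Your reverse direction coincides with the paper's. The only caveats are the ones you already flag: domains of the unbounded operators and the tacit use of $\ker\hat p_B=\ker\hat x_C=\{0\}$ on $L^2(\mathbb{R})$, both of which are at the paper's own level of rigour.
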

\begin{proof}
The condition for invariance under the standard QRF transformation $\hat{S}_{A\to C}$,
\begin{equation}
[\hat{p}_B\otimes \hat{x}_C,\hat{O}^{(A)}_{BC}]=[\hat{p}_B,\hat{O}_B]\otimes\{\hat{p}_C,\hat{O}_C\}+\{\hat{p}_B,\hat{O}_B\}\otimes[\hat{x}_C,\hat{O}_C] = A \otimes D + B \otimes C = 0.
\end{equation}
admits only two non-trivial solutions:
\begin{description}
\item[Case (1): Commuting case] $A = 0$ and $C = 0$ (i.e., $[\hat{p}_B,\hat{O}_B]=0$ and $[\hat{x}_C,\hat{O}_C]=0$)
\item[Case (2): Anticommuting case] $B = 0$ and $D = 0$ (i.e., $\{\hat{p}_B,\hat{O}_B\}=0$ and $\{\hat{x}_C,\hat{O}_C\}=0$)
\end{description}
While there appears to be a third option where $A \otimes D = -B \otimes C$, this would require
\begin{equation}
    [\hat{p}_B,\hat{O}_B]\otimes\{\hat{x}_C,\hat{O}_C\}=-\{\hat{p}_B,\hat{O}_B\}\otimes[\hat{x}_C,\hat{O}_C],\label{eq: option3}
\end{equation}
implying the existence of $\alpha\in\mathbb{R}$ such that $[\hat{p}_B,\hat{O}_B]=\alpha\{\hat{p}_B,\hat{O}_B\}$ and $\{\hat{x}_C,\hat{O}_C\} = -\alpha^{-1}[\hat{x}_C,\hat{O}_C]$. However, since commutators are skew-Hermitian and anti-commutators are Hermitian, they cannot be proportional unless both vanish. This would make $\hat{O}_B=0$ or $\hat{O}_C=0$, yielding a trivial observable. 

Case (1) directly implies that $[\hat{x}_C,\hat{O}_C] =0$, so in this case, the observable is also invariant under the extended QRF transformation. Case (2), on the other hand, characterises precisely the observables which are invariant under the QRF transformation for $P=0$ but not $P\neq 0$ -- after all, a non-zero observable cannot both commute and anti-commute with $\hat{x}_C$.
\end{proof}

\section{A Purely Perspectival Game} \label{app:perspectivalGame} 

Given that our accessibility  game in the main text involves only measurements from, and communication between, perspectives, one may wonder whether an external state prepared by Eve using an external reference frame is necessary for the task. Indeed, we can describe a purely perspectival variant of the accessibility game that does not require an external referee Eve, with the aim of testing whether the perspectival states of Alice and Charlie are compatible with the global charge. There, Alice takes on the role of state preparer, directly preparing Charlie’s reference frame \(C\) and system \(B\) in states of the form \(\ket{\psi}_{BC}^{(A)}\) across different rounds.

From Charlie’s perspective, this preparation manifests as a state on systems \(A\) and \(B\), since Alice’s frame \(A\) is now part of the composite system Charlie observes. Charlie then performs complete tomography on this joint state \(\ket{\psi}_{AB}^{(C)}\). Starting from this situation, they can play any of the variants of the game from the main text. For example, Alice may prepare a state (cf.~Eq.~\eqref{eq:staterelAlice})
\begin{align}  \ket{\psi}^{(A)}_{BC}&= e^{-ix_1P}\frac{1}{\sqrt{2}}\left(\ket{\phi^{(1)}}^{(A)}_{BC}+e^{i\Phi^{(A)}}\ket{\phi^{(2)}}^{(A)}_{BC}\right)
\end{align}
which appears from Charlie’s perspective as (cf.~Eq.~\eqref{eq:staterelCharlie})
\begin{align}
\ket{\psi}^{(C)}_{AB}&= e^{-iz_1P}\frac{1}{\sqrt{2}}\left(\ket{\tilde{\phi}^{(1)}}^{(C)}_{AB}+e^{i\Phi^{(C)}}\ket{\tilde{\phi}^{(2)}}^{(C)}_{AB}\right).
\end{align}
We can assume that Alice knows the state and the phase $\Phi^{(A)}\equiv \varphi + (x_1-x_2)P$ (cf.~Eq.~\eqref{eq: phase relative to A}) that she prepares, and that Charlie can perform a measurement revealing the phase $\Phi^{(C)}\equiv \varphi + (z_1-z_2)P$ (cf.~Eq.~\eqref{eq: phase relative to C}). Upon communication, they can determine \(P\). Equally well, Alice can prepare an arbitrary state of the form
(cf.~Eq.~\eqref{eq:GenRelStateA})
\begin{equation}
\ket{\psi}_{BC}^{(A)} = \int dx dy dz e^{-iPx}\psi(x,y,z) \ket{y-x}_B \ket{z-x}_C ,
\end{equation}
which from Charlie’s perspective looks like (cf.~Eq.~\eqref{eq:genRelStateC})
\begin{equation}
\ket{\psi}_{AB}^{(C)} = \int dx dy dz e^{-iPz}\psi(x,y,z) \ket{x-z}_A \ket{y-z}_B ,
\end{equation}
and which he can determine tomographically. Again, upon communication of the relevant data, they can identify the charge sector \(P\).

Finally, Alice can repeat her preparation procedure across multiple rounds, each time selecting a different state from a collection of linearly independent states within a single charge sector. After collecting measurement data from sufficiently many rounds, Alice and Charlie pool their information. From Alice’s perspective, she knows which states she prepared; from Charlie’s perspective, he has tomographically reconstructed the corresponding output states. Together, these input–output pairs allow them to reconstruct the quantum reference-frame transformation \(\hat{S}_{A \to C}^P\) that relates their perspectives. Since the QRF transformation within a single charge sector is uniquely determined by the total momentum \(P\), reconstructing this unitary directly reveals the charge-sector value. The precision of this reconstruction improves as Alice prepares more linearly independent probe states.\\

These formulations of the game demonstrate that internal observers can test whether their perspectives are compatible with a definite charge -- such as total momentum -- and, if so, which one, via a purely relational protocol, without requiring the existence of an external perspective with an external reference frame or predetermined state preparation by a third party.

\section{Most General Form of a QRF Transformation} \label{app:mostgeneralQRFtrafo}

As noted above, the most general QRF transformation that implements a quantum-controlled translation could include a branch-dependent phase factor that could, in principle, depend on all phase space variables of all systems as seen from the initial reference frame:
\begin{align}
    \hat{S}^{\tilde{f}}_{A \to C}=\hat{\mathcal{P}}_{AC} e^{i\hat{x}_C\hat{p}_B} \tilde{f}(\hat{x}_C,\hat{p}_C,\hat{x}_B,\hat{p}_B).
\end{align}

First, in order to ensure that $\hat{S}_{A \to C}$ is a unitary operator that preserves the norm of states, we must be able to write it instead as
\begin{align}
    \hat{S}^g_{A \to C}=\hat{\mathcal{P}}_{AC} e^{i\hat{x}_C\hat{p}_B} e^{ig(\hat{x}_C,\hat{p}_C,\hat{x}_B,\hat{p}_B)}.
\end{align}

Furthermore, $g$ cannot depend on the momenta of systems $C$ and $B$, since this would induce additional translations of these systems and thus prevent the transformation from implementing a quantum-controlled translation by $\hat{x}_C$. Nevertheless, the above operator could still take the form
\begin{align}
    \hat{S}^f_{A \to C}=\hat{\mathcal{P}}_{AC} e^{i\hat{x}_C\hat{p}_B} e^{if(\hat{x}_C,\hat{x}_B)}.
\end{align}

By requiring unitarity (and assuming $f$ is a real-valued function) we find the following condition:
\begin{align}
    (\hat{S}^f_{A \to C})^\dagger = (\hat{\mathcal{P}}_{AC} e^{i\hat{x}_C\hat{p}_B} e^{if(\hat{x}_C,\hat{x}_B)})^\dagger &= e^{-if(\hat{x}_C,\hat{x}_B)}e^{-i\hat{x}_C\hat{p}_B}\hat{\mathcal{P}}_{AC}^\dagger \\
    &=\hat{\mathcal{P}}_{CA} e^{i\hat{x}_A\hat{p}_B}e^{-if(-\hat{x}_A,\hat{x}_B)}+\left( e^{-if(-\hat{x}_A,\hat{x}_B)} - e^{-if(-\hat{x}_A,\hat{x}_B+\hat{x}_A)} \right)e^{i\hat{x}_A\hat{p}_B} \\
    &\stackrel{!}{=} \hat{\mathcal{P}}_{CA} e^{i\hat{x}_A\hat{p}_B} e^{if(\hat{x}_A,\hat{x}_B)} = \hat{S}^f_{C \to A}.
\end{align}
Thus, (i) $f$ needs to be odd in its first argument and (ii) $f(x,y-x)-f(x,y)\in 2\pi \mathbb{Z}$.

Assume $f:\mathbb{R}^2\to\mathbb{R}$ is continuous in its second argument and satisfies condition (ii),
\begin{equation}
    f(x,y-x)-f(x,y)\in 2\pi\mathbb{Z}\qquad(\forall\,x,y\in\mathbb{R}).
    \label{eq:integer-shift}
\end{equation}
Fix $x$ and define $h(y):=f(x,y)$ as well as
\begin{equation}
    \Delta_x(y):=h(y-x)-h(y).
\end{equation}
By assumption, $\Delta_x(y)\in 2\pi\mathbb{Z}$ for all $y$. Since $\Delta_x$ is continuous in $y$ but takes values in the discrete set $2\pi\mathbb{Z}$, it must be constant:
\begin{equation}
    \Delta_x(y)\equiv 2\pi m(x),\qquad m(x)\in\mathbb{Z}.
\end{equation}
Hence $h$ obeys
\begin{equation}
    h(y-x)-h(y)=2\pi m(x).
    \label{eq:difference-eq}
\end{equation}
We first show that $m(x)$ is constant in $x$ and vanishes. To see that it is constant, note that $h(y-x_1-x_2) -h(y) = 2\pi m(x_1 + x_2)$ is equivalent to
\begin{align}
    \underset{=2\pi m(x_1)}{\underbrace{h(y-x_1-x_2) - h(y-x_2)}} + \underset{=2\pi m(x_2)}{\underbrace{h(y-x_2) -h(y)}} &= 2\pi m(x_1+x_2),\\\nonumber
\end{align}
and thus $m(x_1) + m(x_2) = m(x_1+x_2)$. One can also show that $m(qx) = qm(x)$ for rational numbers $q$, using the linearity multiple times. In particular, this means that $m(\frac{x}{n}) = \frac{1}{n}m(x)$, which still needs to be an integer. But in order to be an integer for all $n$, $m(x)$ must be zero for all $x$.

This then implies that $h(y-x) = h(y)$, which implies that it must be a constant function (unaffected by shifts). Thus, $f(x,y)$ is constant in the second variable $y$ and therefore only a function of $x$.
We conclude that the most general form that the QRF operator for quantum-controlled translations can take is
\begin{align}
    \hat{S}^f_{A \to C}=\hat{\mathcal{P}}_{AC} e^{i\hat{x}_C\hat{p}_B} e^{if(\hat{x}_C)}.
\end{align}

\end{document}